\documentclass[12pt]{amsart}
\pagestyle{plain}
\usepackage{fullpage}

\usepackage{graphicx}

\usepackage{amsfonts}
\usepackage{amssymb}
\usepackage{amsmath}
\usepackage{amsthm}
\usepackage{latexsym}

\newtheorem{theorem}{Theorem}[section]
\newtheorem{lemma}[theorem]{Lemma}

\newtheorem{proposition}[theorem]{Proposition}

\theoremstyle{plain}
\theoremstyle{plain}

\theoremstyle{remark}
    \newtheorem{remark}[theorem]{Remark}

\newcommand{\EE}{\mathbb{E} }
\newcommand{\1}{\mathbf{1} }
\newcommand{\PP}{\mathbb{P} }

\newcommand{\RR}{\mathbb{R} }

\newcommand{\BS}{C_\mathrm{BS}}
\newcommand{\IBS}{Y_\mathrm{BS}}
\renewcommand{\d}{\mathrm{d}}

\begin{document}

\author{Michael R. Tehranchi \\
University of Cambridge }
\address{Statistical Laboratory\\
Centre for Mathematical Sciences\\
Wilberforce Road\\
Cambridge CB3 0WB\\
UK}
\email{m.tehranchi@statslab.cam.ac.uk}

\date{\today}
\thanks{\noindent\textit{Keywords and phrases:} implied volatility, put-call symmetry, asymptotic formulae}
\thanks{\textit{Mathematics Subject Classification 2010:}  91G20, 91B25, 41A60} 

\title[Bounds for implied volatility]{Uniform bounds for Black--Scholes implied volatility}
\maketitle
 
\begin{abstract}
In this note, Black--Scholes implied volatility is expressed 
in terms of various optimisation problems. From these representations, upper and lower
bounds are derived which hold uniformly across moneyness and call price.
Various symmetries of the Black--Scholes formula are exploited to derive new bounds
from old.
These bounds are used to reprove asymptotic formulae for implied volatility
at extreme strikes and/or maturities. 
\end{abstract}

\section{Introduction}
We define the Black--Scholes call price function $\BS: \RR \times [0,\infty) \to [0,1)$  by the formula
\begin{align*}
\BS(k,y) &=  \int_{-\infty}^{\infty} (e^{yz - y^2/2}- e^k )^+ \phi(z) d z  \\
& = \left\{ \begin{array}{ll}
				 {\Phi}\big(-\frac{k}{y} + \frac{y}{2}\big) - e^k {\Phi}
				\big(-\frac{k}{y} - \frac{y}{2}\big) & \mbox{ if } y > 0 \\
				(1-e^k)^+ & \mbox{ if } y =0,
				\end{array} \right.
\end{align*}
where $\phi(z) = \frac{1}{\sqrt{2\pi}} e^{-z^2/2}$ is the standard normal density and $\Phi(x) = \int_{-\infty}^x \phi(z) d z$ is
its distribution function.   As is well known, the financial significance of the function $\BS$ is that,
within the context of the Black--Scholes model \cite{BS}, 
the minimal replication cost  of a European call option with strike $K$ and maturity $T$  written on a stock 
with initial price $S_0$ 
is given by  
$$
\mbox{replication cost } = S_0 e^{-\delta T} \BS\left[ \log\left( \frac{K e^{-rT}}{S_0 e^{-\delta T} } \right), \sigma \sqrt{T} \right]
$$
where   $\delta$ is the dividend rate, $r$ is the interest rate and $\sigma$
is the volatility of the stock.  Therefore, in the definition of $\BS(k,y)$, the first argument $k$ plays the role of log-moneyness 
of the option and the second
argument $y$ is the total standard deviation of the terminal log stock price.

Of the six parameters appearing in the Black--Scholes formula for the replication cost, five are readily observed in the
market.  Indeed, the strike $K$ and maturity date $T$ are specified by the option contract, and the initial stock
price $S_0$ is quoted.   The interest rate is the yield of a zero-coupon bond $B_{0,T}$ with maturity $T$ and unit face value, 
and can be computed from the initial bond price   $B_{0,T} = e^{-rT}$.   Similarly, the dividend rate can computed
from the stock's initial time-$T$ forward price $F_{0,T} = S_0 e^{ (r-\delta)T}$.   

As suggested by Latan\'e \& Rendleman \cite{LataneRendleman} in 1976,
the remaining parameter, the volatility
$\sigma$, can also be inferred from the market, assuming that the call has a quoted price $C_{0,T,K}$.   
Indeed, note that for fixed $k$, the map $\BS(k, \cdot)$ is strictly
increasing and continuous, so we can define the inverse function 
$$
\IBS(k, \cdot): [ (1-e^k)^+, 1 ) \to [0,\infty)
$$
by 
$$
y = \IBS(k, c) \Leftrightarrow   \BS(k,y) = c.
$$
The  implied volatility of the call option is then defined to be
$$
\sigma^{\mathrm{implied}}  = \frac{1}{\sqrt{T}}
\IBS\left[ \log\left( \frac{K e^{-rT}}{S_0 e^{-\delta T} } \right), \frac{C_{0,T,K}}{S_0 e^{-\delta T}} \right].
$$

Because of its financial significance,   the function $\IBS$  has been the subject of much interest.  
For instance, approximations for $\IBS$ can be found in several papers \cite{BrenSub, corrado-miller,  li, pianca}.
Unfortunately,
there seems to be only one case where the function $\IBS$ can be expressed explicitly in terms of elementary functions:  when $k=0$ we have
\begin{align*}
\BS(0,y ) &= 2 \ \Phi\left( \frac{y}{2} \right) -1 \\
& = 1 - 2 \ \Phi\left( -\frac{y}{2} \right) 
\end{align*}
and hence
\begin{align*}
\IBS(0,c) &= 2\ \Phi^{-1} \left( \frac{1+c}{2} \right)   \\
& = - 2 \ \Phi^{-1} \left( \frac{1-c}{2} \right).
\end{align*}

The main contribution of this article is to provide bounds on the 
quantity $\IBS(k,c)$ in terms of elementary functions of $(k,c)$.
As an example, in Proposition \ref{th:cto1} below we will see that
\begin{equation}\label{eq:firstbound}
 \IBS(k,c ) \le -2  \Phi^{-1} \left( \frac{1-c}{1+e^{ k}} \right)
\end{equation}
for every $(k, c)$ such that $(1-e^k)^+ \le c < 1$.   

We list here two possible applications of such bounds.  
When $k \ne 0$,  the function $\IBS$ can be evaluated numerically. 
A simple way
  to  do so  is to implement the bisection method for finding the root of the map $y \mapsto \BS(k,y)- c$.   
That is to say, for fixed $(k,c)$   pick two points $\ell < u$ such that $\BS(k,\ell) < c$ and $\BS(k, u) > c$.  By the 
intermediate value theorem, we know that the root is in  the the interval $(\ell, u)$. We then let $m=\frac{1}{2}(\ell+u)$
be the midpoint.   
If $\BS(k,m) > c$ we know that the root $\IBS(k,c)$ is in  the the interval $(\ell, m)$, in which case we relabel $m$ as $u$.
  Similarly, if $\BS(k,m) < c$ we relabel $m$ as $\ell$. This process is repeated until 
	$|\BS(k,m)-c| < \varepsilon$, where $\varepsilon > 0$ is a given tolerance level whereupon we
declare $\IBS(k,c) \approx m$.  (We note here that a more sophisticated idea is apply the Newton--Raphson method as
suggested by Manaster \& Koehler \cite{manaster-koehler} in 1982.  We will return to this idea in Section \ref{se:main}.)

In order to implement the bisection method, we need a lower bound $\ell$ and upper bound $u$  to initialise
the algorithm.  
However, aside from the obvious lower bound $\ell = 0$, there do not seem to be many well-known
explicit upper and lower bounds
on the quantity $\IBS(k,c)$ which hold \textit{uniformly} in  $(k,c)$.  This note provides such bounds,
and indeed, equation \eqref{eq:firstbound} is an example.

We now consider another application of our bounds.  Consider a market model
with a zero-coupon bond  with maturity date $T$ whose time-$t$ price is
$B_{t,T}$ and  a stock with time $t$-price $S_t$.  Suppose the initial price
of a call option with strike $K$ and maturity $T$ is given by
$$
C_{0,T,K} = B_{0,T} \EE^{T}[ (S_T- K)^+ ]
$$
where the expectation is under a fixed $T$-forward measure.  Further, suppose the
stock's initial time-$T$ forward price   is given by
$$
F_{0,T} = \EE^{T}[ S_T ].
$$
(If the stock pays no dividend, static arbitrage considerations would imply $F_{0,T} = S_0/B_{0,T}$.  
We do not need this formula here so the stock is allowed to pay dividends in the present discussion; 
however, we will
return to this point in Remark \ref{re:bubble} below.)  Now, 
equation \eqref{eq:firstbound} implies that the implied volatility is bounded by
\begin{align}
\sigma^{\mathrm{implied}} & =  
\frac{1}{\sqrt{T}} \IBS\left[ \log \left(\frac{K}{F_{0,T}}\right), \frac{C_{0,T,K}}{F_{0,T} B_{0,T}} \right]
\label{eq:intro} \\
& \le - \frac{2}{\sqrt{T}} \Phi^{-1} \left( \frac{\EE^T [ S_T \wedge K ]}{\EE^T[ S_T ] + K } \right). \nonumber
\end{align}
Note that the above bound is the composition of two ingredients: the model-\textit{dependent}
formulae for the quantities $C_{0,T,K}$ and $F_{0,T}$, and a uniform and
model-\textit{independent} bound on the function $\IBS$.

There has been much recent interest in implied volatility asymptotics.  See
for instance the papers \cite{
BenaimFriz, BenaimFriz2, CC, DHJ, GaoLee, gulisashvili, gulisashvili2, gulisashvili3, lee, RR, JAP09} for 
asymptotic formulae   which depend on
minimal model data, such as the distribution function or the moment generating function
of the returns of the underlying stock.  Paralleling the discussion above, such asymptotic formulae can
be seen as compositions of two limits:  first, the asymptotic shape of the call surface as predicted by the model at, for 
instance, extreme strikes
and/or maturities; and second, asymptotics of the model-independent function $\IBS$.  
The uniform bounds on $\IBS$ that are presented in this note are used to provide short, new
proofs of these second model-independent asymptotic formulae.

 In their long survey article, 
 Andersen \& Lipton \cite{AndersenLipton} warn that   many of the
 asymptotic implied volatility formulae that have appeared in 
recent years may not be applicable in practice, since typical market parameters
are usually not in the range of validity of any of the proposed asymptotic regimes.  
Our new bounds on the function $\IBS$ are uniform, and hence side-step the
critique of Andersen \& Lipton.

The rest of the note is organised as follows.  
In Section \ref{se:sym} we discuss various symmetries of the Black--Scholes
pricing function $\BS$.  These symmetries will be used repeatedly throughout
the remainder of the note.  In Section \ref{se:main} the Black--Scholes
implied total standard deviation function $\IBS$ is represented as the value
function of several optimisation problems.  These results constitute
the main contribution of this note  since they allow $\IBS$ to be bounded arbitrarily
well from above and below by choosing suitable controls to insert into the respective
objective functions.  
In Section \ref{se:asym} these bounds are used to reprove some known asymptotic
formulae.  As a by-product, we derive formulae which have the same asymptotic
behaviour as the known formulae, but are guaranteed to bound $\IBS$ either from
above or below.   

\section{Put-call and close-far symmetries}\label{se:sym}
The Black--Scholes call price function $\BS$ contains a certain amount of symmetry.
In order to streamline the presentation of our bounds, we begin with
an exploration of two of these symmetries.

To treat the two cases $k \ge 0$ and $k < 0$ as 
efficiently as possible, we  begin with an observation.
Suppose $c$ is the normalised price of a call option with log-moneyness $k$.  Then
by the usual put-call parity formula, the corresponding normalised price of a put option with the same log-moneyness is
$$
p = c + e^k - 1.
$$ 
Now if $c = \BS( k, y)$ is for some $y > 0$, then we have
\begin{align*}
p &=  \BS( k, y) + e^k - 1 \\
 &= e^k \Phi\left(\frac{k}{y}+ \frac{y}{2} \right) - \Phi\left(\frac{k}{y}- \frac{y}{2} \right) \\
& = e^k \BS( - k, y).
\end{align*}
The above calculation is the well-known Black--Scholes put-call symmetry formula.   
We have just proven the following result:
\begin{proposition}\label{th:sym}
For any $k \in \RR$ and  $c \in [(1-e^k)^+, 1)$ we have 
$$
\IBS(k, c) = \IBS(-k, e^{-k}c + 1 - e^{-k} ).
$$
\end{proposition}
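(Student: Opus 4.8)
The plan is to read the proposition straight off the put-call symmetry identity
$$
\BS(k,y) = e^k\,\BS(-k,y) + 1 - e^k,
$$
valid for every $k \in \RR$ and every $y \ge 0$. For $y > 0$ this is precisely the computation performed immediately before the statement (rearranging $p = c + e^k - 1 = e^k\BS(-k,y)$ with $c = \BS(k,y)$). For $y = 0$ it degenerates to the elementary identity $(1-e^k)^+ = e^k(1-e^{-k})^+ + 1 - e^k$, which one verifies by splitting into the cases $k \ge 0$ (both sides equal $1-e^k$... in fact the left side is $0$, the right side is $1-e^k$? no) — more carefully, by checking $k \ge 0$ and $k < 0$ separately. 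So the first step is simply to record this identity for all $y \ge 0$.

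Next, fix $k$ and $c \in [(1-e^k)^+, 1)$ and set $y = \IBS(k,c)$, so that $\BS(k,y) = c$ by definition of $\IBS$. Substituting this into the symmetry identity and solving for $\BS(-k,y)$ gives
$$
\BS(-k,y) = e^{-k}\bigl(\BS(k,y) - 1 + e^k\bigr) = e^{-k}c + 1 - e^{-k}.
$$
Since $\BS(-k,\cdot)$ is strictly increasing, hence injective, on $[0,\infty)$ with inverse $\IBS(-k,\cdot)$, this equation says exactly that $y = \IBS(-k,\, e^{-k}c + 1 - e^{-k})$, which is the assertion of the proposition.

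The only point that requires any care — and the closest thing to an obstacle — is checking that the right-hand side is well posed, i.e. that $e^{-k}c + 1 - e^{-k}$ lies in the domain $[(1-e^{-k})^+, 1)$ of $\IBS(-k,\cdot)$. The map $c \mapsto e^{-k}c + 1 - e^{-k}$ is affine and increasing; it sends $c \to 1$ to $1$, so the image stays strictly below $1$, and it sends the left endpoint $c = (1-e^k)^+$ to $(1-e^{-k})^+$ — which is the same case-split ($k \ge 0$ versus $k < 0$) used to establish the $y = 0$ instance of the symmetry identity. Hence the argument is admissible throughout the stated range of $c$, and the proof is complete; essentially all the substance is contained in the Black--Scholes put-call symmetry formula already derived, the rest being routine bookkeeping of the boundary case and of the domains.
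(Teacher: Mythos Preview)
Your proof is correct and follows the same route as the paper: the paper derives the put-call symmetry identity $p = e^k\BS(-k,y)$ for $y>0$ in the paragraph immediately preceding the proposition and declares the result proven, while you make the inversion step and the $y=0$ boundary case explicit. The stream-of-consciousness parenthetical in your treatment of the $y=0$ case (``in fact the left side is $0$, the right side is $1-e^k$? no'') should be cleaned up, but the verification itself is routine and your domain check at the end is a welcome addition that the paper omits.
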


One conclusion of proposition \ref{th:sym} is that it is sufficient to study the function $\IBS(k, \cdot)$
only in the case $k \ge 0$. Indeed, to study the case $k< 0$ one simply applies the above put-call symmetry
formula. 

We now come to another, less well-known, symmetry of the Black--Scholes formula.
While put-call symmetry involves replacing the log-moneyness $k$ with $-k$, the symmetry discussed here involves
replacing the total standard deviation $y$ with $2|k|/y$.    
By put-call symmetry, we can confine our discussion to the case $k > 0$.  

\begin{proposition}\label{th:inv} For all $k > 0$ and $0 < c < 1$, let
$$
\hat C(k, c) = 1 - \int_0^c \frac{ 2k}{ [ \IBS(k, u) ]^2 } du.
$$
Then $\hat C(k, c)  > 0$ and we have
$$
\IBS(k,c) = \frac{2k}{\IBS(k, \hat C(k, c) )}.
$$
\end{proposition}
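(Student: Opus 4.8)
The plan is to identify the "close-far" symmetry at the level of the Black--Scholes price function $\BS$ itself and then translate it into a statement about the inverse $\IBS$. Recall that for $y>0$,
$$
\BS(k,y) = \Phi\!\left(-\tfrac{k}{y}+\tfrac{y}{2}\right) - e^k\,\Phi\!\left(-\tfrac{k}{y}-\tfrac{y}{2}\right).
$$
Fix $k>0$ and consider the map $y \mapsto 2k/y$, which is an involution of $(0,\infty)$. A direct computation replacing $y$ by $2k/y$ (noting that $-\tfrac{k}{y}+\tfrac{y}{2}$ gets sent to $\tfrac{y}{2}-\tfrac{k}{y}+\ldots$; carefully, $-\tfrac{2k^2}{y\cdot 2k}+\tfrac{2k}{2y} = -\tfrac{k}{y}+\tfrac{k}{y}$... the arguments swap sign appropriately) shows that $\BS(k, 2k/y)$ and $\BS(k,y)$ are related by a linear-fractional transformation. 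The key computational step I would carry out first is to verify an identity of the form
$$
\frac{\partial}{\partial y}\,\BS(k, 2k/y) = -\frac{2k}{y^2}\,\frac{\partial}{\partial y'}\BS(k,y')\Big|_{y'=2k/y}\cdot(\text{something}),
$$
or, more usefully, to show directly that the pair $(y, 2k/y)$ produces prices $(c, \hat c)$ with $\hat c$ given by the stated formula. I expect the cleanest route is via the vega: since $\partial_y \BS(k,y) = \phi\!\left(-\tfrac{k}{y}+\tfrac{y}{2}\right)$ (the standard Black--Scholes vega identity, after simplification), one checks that $\partial_y \BS(k,y)$ and $\partial_y \BS(k, 2k/y)$ are simply related, because $-\tfrac{2k}{2k/y}\cdot\tfrac{1}{?}$... concretely $-\tfrac{k}{(2k/y)}+\tfrac{(2k/y)}{2} = -\tfrac{y}{2}+\tfrac{k}{y}$, which is the negative of $-\tfrac{k}{y}+\tfrac{y}{2}$, so the two vegas agree. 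Hence $\frac{d}{dy}\big[\BS(k,2k/y)\big] = -\tfrac{2k}{y^2}\,\phi\!\left(-\tfrac{k}{y}+\tfrac{y}{2}\right) = -\tfrac{2k}{y^2}\,\partial_y\BS(k,y)$.

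Given this, I would change variables to price. Write $c = \BS(k,y)$, so $y = \IBS(k,c)$, and let $g(c) = \BS(k, 2k/\IBS(k,c))$. By the chain rule and the vega identity above, $g'(c) = -\tfrac{2k}{[\IBS(k,c)]^2}$, since the $\partial_y \BS(k,y)$ factor cancels against the derivative of the inverse function. Integrating from $0$ to $c$: as $c \downarrow 0$ we have $\IBS(k,c)\downarrow 0$ (for $k>0$, $\BS(k,0)=0$), and $2k/\IBS(k,c) \to \infty$, so $\BS(k, 2k/\IBS(k,c)) \to \BS(k,\infty) = 1$. Therefore
$$
\BS\!\left(k, \frac{2k}{\IBS(k,c)}\right) = 1 - \int_0^c \frac{2k}{[\IBS(k,u)]^2}\,du = \hat C(k,c).
$$
This is exactly the claim: applying $\IBS(k,\cdot)$ to both sides (legitimate once we know the right-hand side lies in $(0,1)$) gives $\tfrac{2k}{\IBS(k,c)} = \IBS(k, \hat C(k,c))$, which rearranges to the stated formula. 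Positivity $\hat C(k,c) > 0$ is immediate since $\hat C(k,c) = \BS(k, \cdot)$ evaluated at a positive argument, and $\BS$ is strictly positive there; that the argument of $\IBS$ is genuinely in the domain $[(1-e^k)^+,1) = [0,1)$ follows because $\BS$ maps $(0,\infty)$ into $(0,1)$.

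The main obstacle is the boundary behaviour of the integral as $c\downarrow 0$: one must confirm that $\int_0^c \tfrac{2k}{[\IBS(k,u)]^2}\,du$ converges (i.e., is finite for $c<1$) and equals $1 - \hat C(k,c)$ with the correct constant. This hinges on the rate at which $\IBS(k,u)\to 0$ as $u\to 0$; the integrand blows up, but the identification $g(c) = \hat C(k,c)$ with $g(0^+)=1$ handles convergence automatically \emph{provided} the fundamental-theorem-of-calculus step is justified, i.e. provided $c\mapsto \BS(k,2k/\IBS(k,c))$ is absolutely continuous on $(0,c]$ with the stated derivative and has limit $1$ at the left endpoint. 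Since $\IBS(k,\cdot)$ is continuous and strictly increasing (hence a.e. differentiable, and in fact $C^1$ on $(0,1)$ by the inverse function theorem applied to the smooth, strictly-vega-positive $\BS$), this is routine — so the only real content is the vega cancellation identity in the first paragraph, which is a short explicit computation.
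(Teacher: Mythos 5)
Your proposal is correct and follows essentially the same route as the paper: both reduce the claim to the identity $\hat C(k,\BS(k,y)) = \BS(k,2k/y)$ and verify it by differentiation, using the Black--Scholes vega formula $\partial_y \BS(k,y)=\phi(-k/y+y/2)$ together with the evenness of $\phi$ at the swapped argument, plus the boundary value $1$ at one endpoint. Your version merely carries out the change of variables to the price variable $c$ and spells out the limit $c\downarrow 0$ explicitly, details the paper leaves implicit.
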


Figure \ref{fi:c-hat} is shows the graph of $c \mapsto \hat C(k, c)$ when $k=0.2$.

\begin{figure}  	
\caption{The function $\hat C(k, \cdot)$.}
	  \label{fi:c-hat}
		\includegraphics[trim = 0cm 0 0cm 0, clip, scale = 0.45]{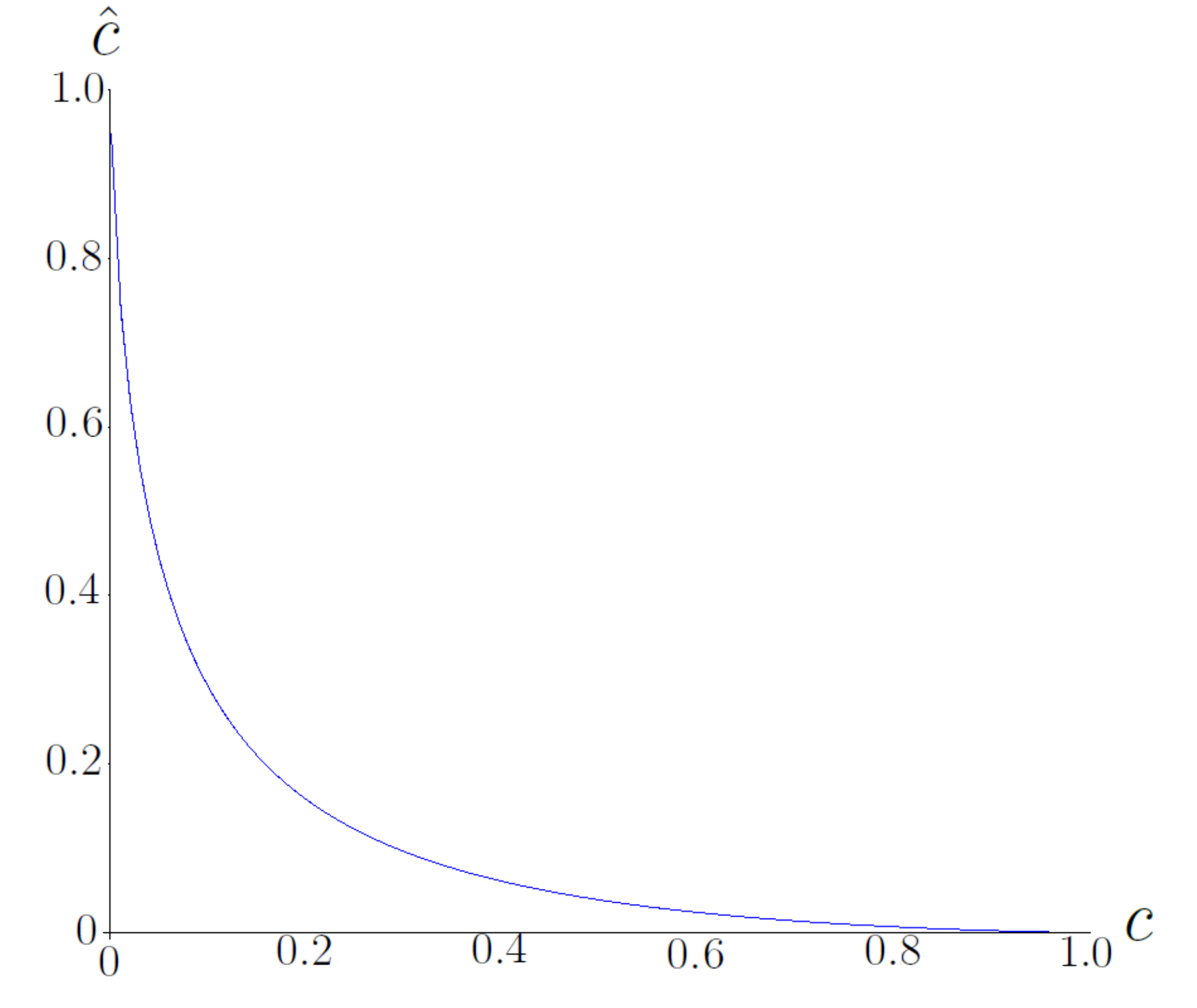}		
	\end{figure} 
	
\begin{proof} We must prove that
$$
\hat C \left(k,  c   \right) = \BS\left(k, \frac{2k}{\IBS(k,c)} \right),
$$
or equivalently
$$
\hat C \left(k,  \BS(k,y)   \right) = \BS\left(k, \frac{2k}{y} \right),
$$
The above identity can be verified by differentiating both sides with respect to $y$, 
and using the Black--Scholes vega formula: for $k > 0$, we have
$$
 \BS(k,y) = \int_0^y \phi( -k/x + x/2) dx.
$$ 
\end{proof}

\begin{remark}
Fix $k> 0$ and $y > 0$, and let $c = \BS(k,y)$.   Note that $c \approx 0$ when $y$ is very small, 
and indeed it is a straightforward exercise to verify (see Section \ref{se:asym}) that 
$$
\log c = - \frac{k^2}{2 y^2} + O(\log y) \mbox{ as } y \downarrow 0.
$$
 On the other hand, we have $c \approx 1$ when $y$ is very large,
and furthermore
$$
\log(1-c) = - \frac{y^2}{8}    + O(\log y) \mbox{ as } y \uparrow \infty.
$$

Now, let $\hat c = \BS(k, 2k/y)$ so that by Proposition \ref{th:inv}
we have $\hat c = \hat C(k,c)$.
From the above calculations we have
$$
\log (1- \hat c) = - \frac{k^2}{2 y^2} + O(\log y) \mbox{ as } y \downarrow 0,
$$
and
$$
\log \hat c  = - \frac{y^2}{8}   + O(\log y)  \mbox{ as } y \uparrow \infty.
$$

In the context of the Black--Scholes model, the quantity $y$
has the interpretation as the total
standard deviation $y= \sigma \sqrt{T}$, where $\sigma$
is the volatility and $T$ is the maturity date of the option.  
Proposition \ref{th:inv} then is a symmetry  relation between the prices of 
short-dated and long-dated options. 
\end{remark}

We conclude this section with some easy observations which we will use later.

\begin{proposition}\label{th:inv2}  For all $k > 0$,
the function $\hat C(k, \cdot)$ is convex and satisfies
the functional equation 
$$
\hat C(k, \hat C(k, c) ) = c
$$
holds for all $0 < c < 1$. 
\end{proposition}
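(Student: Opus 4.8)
The plan is to establish the two assertions in turn, the main tool in both cases being the identity proven en route to Proposition \ref{th:inv}, namely that
$$
\hat C(k, c) = \BS\left(k, \frac{2k}{\IBS(k,c)}\right)
$$
for every $k > 0$ and $0 < c < 1$; in particular $\hat C(k,c) \in (0,1)$, since $0 < \BS < 1$.

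For the functional equation, I would set $y = \IBS(k,c)$, so that $c = \BS(k,y)$ and, by the displayed identity, $\hat C(k,c) = \BS(k, 2k/y)$. Since $\BS(k, \cdot)$ and $\IBS(k, \cdot)$ are mutually inverse, this gives $\IBS(k, \hat C(k,c)) = 2k/y$. Applying the identity once more, now with $c$ replaced by $\hat C(k,c) \in (0,1)$, yields
$$
\hat C(k, \hat C(k,c)) = \BS\left(k, \frac{2k}{\IBS(k, \hat C(k,c))}\right) = \BS\left(k, \frac{2k}{2k/y}\right) = \BS(k,y) = c.
$$
Equivalently, under the bijective change of variable $c \leftrightarrow y = \IBS(k,c)$ the map $\hat C(k, \cdot)$ is conjugate to $y \mapsto 2k/y$, which is an involution, so $\hat C(k, \cdot)$ is an involution too.

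For convexity I would differentiate the defining integral directly. Since $k > 0$ we have $(1-e^k)^+ = 0$, so $\IBS(k, \cdot)$ is a continuous, strictly increasing, strictly positive function on $(0,1)$, and the integrand $u \mapsto 2k/[\IBS(k,u)]^2$ is continuous there. The fundamental theorem of calculus then gives
$$
\frac{\partial}{\partial c} \hat C(k,c) = - \frac{2k}{[\IBS(k,c)]^2}.
$$
Because $c \mapsto [\IBS(k,c)]^2$ is strictly increasing, the right-hand side is a strictly increasing function of $c$; a function on an interval whose derivative is nondecreasing is convex, so $\hat C(k, \cdot)$ is convex on $(0,1)$.

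I do not anticipate a real obstacle: both parts are short once the representation $\hat C(k,c) = \BS(k, 2k/\IBS(k,c))$ and the strict monotonicity of $\IBS(k, \cdot)$ are available. The only mild subtlety is that the integrand blows up as $u \downarrow 0$; this is harmless, because Proposition \ref{th:inv} already guarantees $\hat C(k,c) > 0$, i.e. $\int_0^c 2k/[\IBS(k,u)]^2 \, du \le 1 < \infty$, and the differentiation above is carried out on the open interval $(0,1)$, where the integrand is continuous. It is worth noting that convexity does not follow from the involution property alone --- a decreasing involution of $(0,1)$ need not be convex, as the example $c \mapsto \sqrt{1-c^2}$ shows --- so the explicit derivative computation is genuinely needed.
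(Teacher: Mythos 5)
Your proof is correct and follows essentially the same route as the paper: convexity via the fact that the derivative $-2k/[\IBS(k,\cdot)]^2$ of $\hat C(k,\cdot)$ is increasing, and the involution property by applying the identity of Proposition \ref{th:inv} twice (you compute through $\BS$ directly, the paper computes through $\IBS$ and invokes its injectivity, but this is a cosmetic difference). Your added remarks on the integrability of the integrand near $0$ and on convexity not being a formal consequence of the involution property are sound but not needed.
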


\begin{proof}
It is easy to see that $\IBS(k, \cdot)$ is strictly increasing.
That $\hat C(k, \cdot)$ is convex follows from the fact that its
gradient $-2k/\IBS(k, \cdot)^2$ is increasing.  

That the functional equation is proven by noting
$$
\IBS(k, c) = \frac{2k}{\IBS(k, \hat C(k,c)) } = \IBS(k, \hat C(k,\hat C(k,c)))
$$
and using the fact that that $\IBS(k, \cdot)$ is strictly increasing.
\end{proof}

\begin{proposition}\label{th:J-funct}
For $k > 0$, let
$$
J(k,c) = \int_0^c \frac{1}{\IBS(k,u)} du.
$$
Then
$$
J(k,c) + J(k, \hat{c} ) = J(k,1) 
$$
where $\hat c = \hat C(k,c)$.
\end{proposition}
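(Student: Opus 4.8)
The plan is to differentiate in $c$ and show that the map $c \mapsto J(k,c) + J(k, \hat C(k,c))$ is constant on $(0,1)$, and then to pin down the constant by letting $c \downarrow 0$. As a preliminary, I would check that $J(k,\cdot)$ is well defined and finite on all of $[0,1]$: since $\IBS(k,\cdot)$ is continuous, strictly increasing and strictly positive on $(0,1)$, the only possible issue is integrability of $u \mapsto 1/\IBS(k,u)$ near $u = 0$, and by the small-$y$ estimate $\log c = -k^2/(2y^2) + O(\log y)$ recorded in the remark following Proposition~\ref{th:inv} one has $1/\IBS(k,u) = O\big(\sqrt{\log(1/u)}\big)$ as $u \downarrow 0$, which is integrable; in particular $J(k,1) < \infty$. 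I would also note that $0 < \hat C(k,c) < 1$ for $c \in (0,1)$ (positivity is part of Proposition~\ref{th:inv}, while $\hat C(k,c) < 1$ holds because the integral defining it is strictly positive), so that $J(k, \hat C(k,c))$ is meaningful.

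Next, fixing $c \in (0,1)$, I would differentiate. By the fundamental theorem of calculus, $\partial_c J(k,c) = 1/\IBS(k,c)$, and differentiating the integral defining $\hat C$ gives $\partial_c \hat C(k,c) = -2k/[\IBS(k,c)]^2$. Recalling from Proposition~\ref{th:inv} that $\IBS(k, \hat C(k,c)) = 2k/\IBS(k,c)$, the chain rule yields
$$
\frac{\d}{\d c}\, J\big(k, \hat C(k,c)\big) = \frac{\partial_c \hat C(k,c)}{\IBS\big(k, \hat C(k,c)\big)} = \frac{\IBS(k,c)}{2k}\left(-\frac{2k}{[\IBS(k,c)]^2}\right) = -\frac{1}{\IBS(k,c)},
$$
and hence
$$
\frac{\d}{\d c}\Big[\, J(k,c) + J\big(k, \hat C(k,c)\big)\,\Big] = \frac{1}{\IBS(k,c)} - \frac{1}{\IBS(k,c)} = 0
$$
on $(0,1)$. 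Therefore $c \mapsto J(k,c) + J(k, \hat C(k,c))$ is constant there.

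Finally, I would identify the constant by letting $c \downarrow 0$: directly from the definitions $J(k,c) \to 0$ and $\hat C(k,c) \to 1$, so by continuity of $J(k,\cdot)$ on $[0,1]$ (which uses the finiteness of $J(k,1)$) the constant equals $J(k,1)$, giving the asserted identity. I do not anticipate any genuine obstacle here: once Proposition~\ref{th:inv} is available, the whole argument reduces to the fundamental theorem of calculus together with a limit. The only step that warrants a little care is the finiteness of $J(k,1)$, which I would handle as above via the near-zero behaviour of $\IBS(k,\cdot)$.
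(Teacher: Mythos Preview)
Your proof is correct and follows essentially the same approach as the paper: both arguments show the sum is constant by differentiating and then identify the constant. The only cosmetic difference is that the paper parametrises by $y$ via $c = \BS(k,y)$ and differentiates in $y$ (using the vega formula), whereas you differentiate directly in $c$ using Proposition~\ref{th:inv}; your version is in fact more complete, since you also justify the finiteness of $J(k,1)$ and the identification of the constant, points the paper leaves implicit.
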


\begin{proof}
By setting $c = \BS(k,y)$ and hence $\hat c= \BS(k, 2k/y)$, the identity can be 
proven by computing the derivative with respect to $y$ of the left-hand side, and note
that it is vanishes identically.
\end{proof}

\begin{remark}
By changing variables, we have the identities
\begin{align*}
J(k,1) & = \int_{0}^{\infty} \frac{\phi(-k/y+y/2)}{y} dy \\
& = \int_{-\infty}^{\infty} \frac{ \phi(x) }{\sqrt{ x^2 + 2k}} dx \\
& = \frac{e^{k/2}}{\sqrt{2\pi}} K_0 (k/2)
\end{align*}
where $K_0$ is a modified Bessel function.  See \cite{table}.
\end{remark}

\section{Various optimisation problems}\label{se:main}
This section contains one of the main results of this note, formulae for
the function $\IBS$ in terms of various optimisation problems.  
The first result  is that $\IBS(k,c)$ can be calculated by solving
a minimisation problem.  In particular, we can use this formula to find an upper
bound simply by evaluating the objective function at a feasible control.

\begin{theorem}\label{th:main} 
For all $k \in \RR$ and $(1-e^k)^+ \le c < 1$ we have
\begin{align*}
\IBS(k,c) &= \inf_{d_1 \in \RR} [d_1-\Phi^{-1} \big( e^{-k}(\Phi(d_1) -c) \big) ] \\
& = \inf_{d_2 \in \RR}  [ \Phi^{-1}\big( c + e^k \Phi(d_2) \big) - d_2  ] \\
\end{align*}
Furthermore, if $c > (1-e^k)^+$, then the two infima are attained at
\begin{align*}
d_1^* &= -\frac{k}{y} + \frac{y}{2}, \\
d_2^* &= -\frac{k}{y} - \frac{y}{2}
\end{align*}
where $y = \IBS(k,c)$.
\end{theorem}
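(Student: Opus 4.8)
The plan is to start from the defining identity $\BS(k,y)=c$ with $y=\IBS(k,c)$ and read off the two stated formulas as a pair of identities, then argue that they are infima. Recall that for $y>0$,
\[
\BS(k,y) = \Phi\Bigl(-\frac{k}{y}+\frac{y}{2}\Bigr) - e^k\,\Phi\Bigl(-\frac{k}{y}-\frac{y}{2}\Bigr).
\]
Writing $d_1^*=-k/y+y/2$ and $d_2^*=-k/y-y/2$, so that $d_1^*-d_2^*=y$, the equation $\BS(k,y)=c$ rearranges to $\Phi(d_2^*)=e^{-k}(\Phi(d_1^*)-c)$ and to $\Phi(d_1^*)=c+e^k\Phi(d_2^*)$. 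Hence $d_2^*=\Phi^{-1}\bigl(e^{-k}(\Phi(d_1^*)-c)\bigr)$ and $d_1^*=\Phi^{-1}\bigl(c+e^k\Phi(d_2^*)\bigr)$, which gives
\[
y = d_1^* - \Phi^{-1}\bigl(e^{-k}(\Phi(d_1^*)-c)\bigr) = \Phi^{-1}\bigl(c+e^k\Phi(d_2^*)\bigr) - d_2^*.
\]
So each objective function, evaluated at the stated optimiser, returns exactly $\IBS(k,c)$. This already establishes that the infima are $\le \IBS(k,c)$, and that they are attained at $d_1^*,d_2^*$ provided we show these are in fact minimisers.

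Next I would show that each objective is indeed bounded below by $\IBS(k,c)$, i.e. that for every feasible $d_1$,
\[
d_1 - \Phi^{-1}\bigl(e^{-k}(\Phi(d_1)-c)\bigr) \ge \IBS(k,c),
\]
and symmetrically for $d_2$. The clean way is to define, for a candidate control $d_1$, the value $y(d_1) = d_1-\Phi^{-1}\bigl(e^{-k}(\Phi(d_1)-c)\bigr)$ whenever the argument of $\Phi^{-1}$ lies in $(0,1)$ — one must check feasibility, i.e. that $e^{-k}(\Phi(d_1)-c)\in(0,1)$, which forces $\Phi(d_1)>c$ and holds for $d_1$ large enough. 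Then I claim $\BS(k,y(d_1)) \le c$, with equality exactly when $d_1=d_1^*$; since $\BS(k,\cdot)$ is strictly increasing and $\BS(k,\IBS(k,c))=c$, this yields $y(d_1)\ge \IBS(k,c)$. To see $\BS(k,y(d_1))\le c$: set $a=d_1$ and $b=\Phi^{-1}(e^{-k}(\Phi(d_1)-c))$, so $y=a-b$ and $\Phi(a)-e^k\Phi(b)=c$ by construction. Then $\BS(k,y) - c = \Phi(-k/y+y/2) - e^k\Phi(-k/y-y/2) - \Phi(a) + e^k\Phi(b)$, and one shows the bracket $-k/y+y/2$ compared with $a$ — more precisely, the function $x\mapsto \Phi(x) - e^k\Phi(x-y)$ is maximised over $x\in\RR$ at $x = k/y + y/2$ by a first-derivative calculation ($\phi(x)=e^k\phi(x-y)$ iff $x=k/y+y/2$, using $y>0$), so its value at $x=-k/y+y/2$ — wait, that is not the maximiser — gives $\BS(k,y)$, while its value at $x=a$ gives $c$; comparing requires knowing on which side of the peak $a$ and $-k/y+y/2$ sit. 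This is the step that needs care.

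The cleanest formulation of that step, and the one I would actually write, is: fix $y>0$ and consider $g(x) = \Phi(x) - e^k\Phi(x-y)$. Then $g'(x)=\phi(x)-e^k\phi(x-y)$ vanishes only at $x_0 := \frac{k}{y}+\frac{y}{2}$, with $g$ increasing on $(-\infty,x_0)$ and decreasing on $(x_0,\infty)$; moreover $g(-k/y+y/2)=\BS(k,y)$ and $g(\pm\infty)$ limits pin down the range. For the $d_1$-problem the feasible point satisfies $g(d_1)=c<1$ and $d_1>$ (something), and a short monotonicity argument on the map $y\mapsto$ (objective value) — equivalently differentiating $y(d_1)$ in $d_1$ and checking the sign changes once, at $d_1^*$ — shows $y(d_1)$ is minimised at $d_1^*$. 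The $d_2$-formula then follows either by the identical argument applied to $-k/y-y/2$, or, more slickly, by invoking put-call symmetry (Proposition \ref{th:sym}): replacing $(k,c)$ by $(-k, e^{-k}c+1-e^{-k})$ swaps the roles of $d_1$ and $d_2$, since $d_1^*(k) = -d_2^*(-k)$. I expect the main obstacle to be exactly the unimodality-of-$g$ bookkeeping: verifying feasibility of the controls (that the $\Phi^{-1}$ arguments stay in $(0,1)$), identifying the correct branch, and confirming the single sign change of the derivative of the objective — the underlying computations are elementary Gaussian manipulations, but getting the inequalities pointed the right way for both $k\ge 0$ and $k<0$ is where the care lies. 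The degenerate case $c=(1-e^k)^+$, where $\IBS=0$ and the infimum is approached but not attained (the optimiser $d_1^*$ runs off to $\pm\infty$), should be handled as a separate short remark.
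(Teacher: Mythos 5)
Your overall strategy --- evaluate the objective at $d_1^*,d_2^*$ to get the value $\IBS(k,c)$, then prove the objective is bounded below by $\IBS(k,c)$ everywhere --- is sound, and it rests on the same underlying representation as the paper's proof, namely $\BS(k,y)=\sup_{x}[\Phi(x)-e^k\Phi(x-y)]$. But your execution of the lower-bound step contains two errors that leave a genuine gap. First, the inequality you announce points the wrong way: you claim $\BS(k,y(d_1))\le c$ and then deduce $y(d_1)\ge\IBS(k,c)$ from monotonicity of $\BS(k,\cdot)$; as written, that hypothesis would give the \emph{opposite} conclusion $y(d_1)\le\IBS(k,c)$. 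What you need is $\BS(k,y(d_1))\ge c$. Second, the critical point of $g(x)=\Phi(x)-e^k\Phi(x-y)$ is miscomputed: $\phi(x)=e^k\phi(x-y)$ solves to $x_0=-k/y+y/2$, not $k/y+y/2$. That slip is exactly what leads you to believe a delicate ``which side of the peak'' case analysis is required. With the correct critical point the step closes in one line: $g$ is unimodal with maximum at $x_0=-k/y+y/2$, so for your candidate $\tilde y=d_1-b$ with $b=\Phi^{-1}\bigl(e^{-k}(\Phi(d_1)-c)\bigr)$ one has
\[
\BS(k,\tilde y)=g(x_0)=\max_{x\in\RR}\bigl[\Phi(x)-e^k\Phi(x-\tilde y)\bigr]\ \ge\ g(d_1)=\Phi(d_1)-e^k\Phi(b)=c,
\]
whence $\tilde y\ge\IBS(k,c)$ with no branch bookkeeping and no separate treatment of $k\ge 0$ versus $k<0$. (Feasibility is also automatic: since $c\ge 1-e^k$ one always has $e^{-k}(\Phi(d_1)-c)<1$, and where the argument is $\le 0$ the convention $\Phi^{-1}=-\infty$ makes the objective $+\infty$, so such $d_1$ cannot realise the infimum.)

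For comparison, the paper avoids the calculus entirely: it writes $c=\int(e^{yz-y^2/2}-e^k)^+\phi(z)\,\d z$ for the \emph{true} $y$, truncates the integral to $[-d_2,\infty)$ and drops the positive part, using $(X-K)^+\ge(X-K)\1_{[H,\infty)}(X)$, to obtain $c\ge\Phi(d_2+y)-e^k\Phi(d_2)$ for every $d_2$ --- the same max-representation, obtained without locating any critical point, and with the equality conditions (hence attainment at $d_2^*$) falling out for free. Your route is equally valid once the two sign errors are repaired, but as it stands the key inequality is asserted, flagged by you as ``the step that needs care,'' and not actually proved.
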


\begin{remark} We are using the convention that $\Phi^{-1}(u) = + \infty$ for $u \ge 1$
and $\Phi^{-1}(u) = -\infty$ for $u \le 0$.
\end{remark}

 The following proof 
is due to Pieter-Jan De Smet \cite{S}, simplifying the proof in an early version of this paper.
The idea is essentially that the inequality 
$$
(X-K)^+ \ge (X-K) \1_{[H, \infty)}(X)
$$
 holds for all
$X,K, H \ge 0$ with equality if and only if $H=K$.
\begin{proof}
 Fix $k \in \RR$ and $(1-e^k)^+ \le c < 1$ and let
 $y \ge 0$ be such that $\BS(k,y) =c$. 
Note that for any $d_2 \in \RR$ we have
\begin{align*}
c &= \int_{-\infty}^{\infty} (e^{y z - y^2/2} - e^k)^+ \phi(z) \d z \\
& \ge \int_{-d_2}^{\infty} (e^{y z - y^2/2} - e^k)^+  \phi(z) \d z \\
& \ge \int_{-d_2}^{\infty} (e^{y z - y^2/2} - e^k)  \phi(z) \d z \\
& = \Phi(d_2+y) - e^k \Phi(d_2).
\end{align*}
There is equality from the first to second line only if $-d_2  \le  k/y + y/2$,
and there is equality from the second to the third line only if $- d_2 \ge  k/y+ y/2$.
Rearranging then yields
$$
y \le \Phi^{-1}(c + e^k \Phi(d_2)) - d_2.
$$
Let $d_2 =  \Phi^{-1} \left( e^{-k}(\Phi(d_1) -c) \right) $ in the above inequality to 
obtain the first expression.
	\end{proof}
		
Let 
\begin{equation}\label{eq:H1}
H_1(d; k, c) = d-\Phi^{-1} \big( e^{-k}(\Phi(d) -c) \big) 
\end{equation}
and
\begin{equation}
H_2(d; k, c) =  \Phi^{-1}\big( c + e^k \Phi(d) \big) - d,
\end{equation}
and note that
$$
H_1(d; k, c) = H_2(-d; -k, e^{-k} c + 1 - e^{-k} )
$$
in line with put-call symmetry.  We use this notation
to compute $\IBS(k,c)$ in terms of a maximisation problem.  This representation can be used, in principle,
to find lower bounds.

\begin{theorem}\label{th:lowerbound}
Let $\mathcal C$ be the space of continuous functions on $[0,1]$.  
For $k > 0$ and $0 < c < 1$,  we have
$$
\IBS(k,c) = \sup_{D  \in \mathcal C, d  \in \RR} \frac{2k}{H_i \left( d ; k, 1 - \int_0^c \frac{2k}{  H_j(D(u); k, u)^2 } du \right)}
$$
for any $i,j \in \{1,2\}$.
\end{theorem}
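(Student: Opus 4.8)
The plan is to feed the minimisation representation of Theorem~\ref{th:main} into the close--far symmetry of Proposition~\ref{th:inv}. Throughout, write $\hat c = \hat C(k,c)$; Proposition~\ref{th:inv} then supplies $0 < \hat c < 1$, the identity $\IBS(k,c) = 2k/\IBS(k,\hat c)$, and $1 - \hat c = \int_0^c 2k/\IBS(k,u)^2\,\d u$. Two elementary facts will be used repeatedly, both read off from the explicit formulae for $H_1$ and $H_2$: for fixed $k$ and $d$ the map $c' \mapsto H_i(d;k,c')$ is nondecreasing (with the stated $\pm\infty$ convention for $\Phi^{-1}$), and by Theorem~\ref{th:main} one has $\IBS(k,u) = \inf_{d'\in\RR} H_j(d';k,u)$ with the infimum attained at a finite point $d_j^*(u)$ for each $u \in (0,1)$.

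First I would prove the inequality ``$\le$''. Fix admissible $D \in \mathcal C$ and $d \in \RR$. Since $H_j(D(u);k,u) \ge \IBS(k,u) > 0$ on $(0,c]$, we get $2k/H_j(D(u);k,u)^2 \le 2k/\IBS(k,u)^2$, so after integrating,
\[
\tilde c \;:=\; 1 - \int_0^c \frac{2k}{H_j(D(u);k,u)^2}\,\d u \;\ge\; 1 - \int_0^c \frac{2k}{\IBS(k,u)^2}\,\d u \;=\; \hat c.
\]
Monotonicity of $H_i$ in its last argument followed by Theorem~\ref{th:main} then gives $H_i(d;k,\tilde c) \ge H_i(d;k,\hat c) \ge \IBS(k,\hat c)$, whence by Proposition~\ref{th:inv}
\[
\frac{2k}{H_i(d;k,\tilde c)} \;\le\; \frac{2k}{\IBS(k,\hat c)} \;=\; \IBS(k,c).
\]
Taking the supremum over $(D,d)$ settles this direction; the degenerate case $\tilde c = 1$ forces $H_i(d;k,\tilde c) = +\infty$ and hence left-hand side $0$, consistently with the bound.

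For the reverse inequality I would approach the supremum by near-optimal controls. By Theorem~\ref{th:main}, $H_j(d_j^*(u);k,u) = \IBS(k,u)$ for $u \in (0,1)$, and $d_j^*$ is continuous there since $\IBS(k,\cdot)$ is continuous and strictly increasing; however $d_j^*(u) \to -\infty$ as $u \downarrow 0$, so $d_j^*$ itself is not admissible. To circumvent this, for small $\varepsilon > 0$ let $D_\varepsilon \in \mathcal C$ be any continuous function on $[0,1]$ that coincides with $d_j^*$ on $[\varepsilon,c]$ (possible since $d_j^*$ is finite and continuous there). Using $H_j(D_\varepsilon(u);k,u) \ge \IBS(k,u)$ on $[0,\varepsilon]$, equality on $[\varepsilon,c]$, and the finiteness of $\int_0^c 2k/\IBS(k,u)^2\,\d u = 1-\hat c$,
\[
\hat c \;\le\; \tilde c(D_\varepsilon) \;\le\; \hat c + \int_0^\varepsilon \frac{2k}{\IBS(k,u)^2}\,\d u \;\longrightarrow\; \hat c \qquad (\varepsilon \downarrow 0).
\]
Now pick $d = d_i^*(\hat c) \in \RR$, the finite minimiser of $H_i(\cdot;k,\hat c)$. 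A short computation with the definition of $\BS$ shows the argument of $\Phi^{-1}$ appearing in $H_i(d_i^*(\hat c);k,\hat c)$ equals $\Phi(d_1^*(\hat c))$ or $\Phi(d_2^*(\hat c))$, a value strictly in $(0,1)$, so $c' \mapsto H_i(d_i^*(\hat c);k,c')$ is continuous at $c' = \hat c$. Hence
\[
\frac{2k}{H_i(d_i^*(\hat c);k,\tilde c(D_\varepsilon))} \;\longrightarrow\; \frac{2k}{H_i(d_i^*(\hat c);k,\hat c)} \;=\; \frac{2k}{\IBS(k,\hat c)} \;=\; \IBS(k,c)
\]
as $\varepsilon \downarrow 0$, giving the inequality ``$\ge$'' and completing the argument for all four choices of $i,j$.

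I expect the only genuine difficulty to be this endpoint issue: the exact optimiser $d_j^*$ is unbounded at $u=0$ and hence lies outside $\mathcal C$, so the supremum need not be attained and must be approached through the truncation above. The quantitative input that makes the truncation harmless is precisely the positivity $\hat C(k,c) > 0$ of Proposition~\ref{th:inv}, equivalently the integrability of $u \mapsto 2k/\IBS(k,u)^2$ near the origin; everything else is routine bookkeeping with the monotonicity of $H_1$ and $H_2$ in their arguments.
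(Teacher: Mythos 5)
Your proof is correct and follows essentially the same route as the paper's: the chain $H_j(D(u);k,u)\ge \IBS(k,u)$, monotonicity in the price argument, and Theorem~\ref{th:main} give $H_i(d;k,\tilde c)\ge \IBS(k,\hat C(k,c))$, and Proposition~\ref{th:inv} then converts this into the bound $2k/H_i(\cdot)\le \IBS(k,c)$ for every admissible control. You are in fact more careful than the paper, which disposes of the reverse inequality with only ``the conclusion follows from Proposition~\ref{th:inv}'': your observation that the exact optimiser $d_j^*$ diverges at $u=0$ and therefore lies outside $\mathcal C$, so that the supremum must be approached by continuous truncations with the error controlled by the integrability of $u\mapsto 2k/\IBS(k,u)^2$ near the origin, correctly fills a step the published proof leaves implicit.
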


\begin{proof}
By Theorem \ref{th:main} we have $ \IBS(k, u) \le H_j(d; k , u )$ for all $d$, and 
since $\IBS(k, \cdot)$ is increasing, we  have for any $ D \in \mathcal C$ that
\begin{align*}
\IBS( k, \hat C(k, c) ) &  = \IBS \left( k, 1 - \int_0^c \frac{2k}{ \IBS(k, u)^2} du \right) \\
& \le \IBS\left( k,  1 - \int_0^c \frac{2k}{  H_j(D(u); k, u)^2}du \right)  \\
& \le H_i\left( d; k, 1 - \int_0^c \frac{2k}{  H_j(D(u); k, u)^2} du \right).
\end{align*}
The conclusion follows from Proposition \ref{th:inv}.
\end{proof}

In light of Proposition \ref{th:inv} we now give a representation
of $\hat C$ in terms of a minimisation problem.
We restrict attention to $k > 0$ with no real loss thanks to put-call symmetry.

\begin{proposition}
For $k > 0$ and $0 < c < 1$ we have
$$
\hat C(k,c) = \sup_{y \ge 0} \left[ \BS\left(k, \frac{2k}{y} \right) - \frac{2k}{y^2}( c - \BS(k,y)) \right]
$$
\end{proposition}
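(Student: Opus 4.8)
The plan is to recognise the bracketed expression as the tangent line to the convex function $\hat C(k, \cdot)$ at a suitable point, and then to invoke the elementary fact that a differentiable convex function is the pointwise supremum of its tangent lines.

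First I would record two properties of $\hat C(k, \cdot)$ for fixed $k > 0$. From the definition $\hat C(k, c) = 1 - \int_0^c \frac{2k}{\IBS(k,u)^2}\,du$, together with the continuity and strict positivity of $\IBS(k, \cdot)$ on $(0,1)$, the map $c \mapsto \hat C(k,c)$ is continuously differentiable on $(0,1)$ with
$$
\partial_c \hat C(k,c) = - \frac{2k}{\IBS(k,c)^2};
$$
and by Proposition \ref{th:inv2} it is convex.

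Next, fix $y > 0$, put $c_0 = \BS(k,y)$ — so that $\IBS(k,c_0) = y$ and, by Proposition \ref{th:inv}, $\hat C(k, c_0) = \BS(k, 2k/y)$ — and consider the affine function
$$
g_y(c) = \BS\left(k, \frac{2k}{y}\right) - \frac{2k}{y^2}\bigl(c - \BS(k,y)\bigr).
$$
One checks immediately that $g_y(c_0) = \hat C(k,c_0)$ and $g_y'(c) = -2k/y^2 = -2k/\IBS(k,c_0)^2 = \partial_c \hat C(k,c_0)$, so $g_y$ is precisely the tangent line to $\hat C(k, \cdot)$ at $c_0$. Convexity then gives $\hat C(k,c) \ge g_y(c)$ for all $c \in (0,1)$ and all $y > 0$, hence $\hat C(k,c) \ge \sup_{y \ge 0} g_y(c)$, where the boundary values $y = 0$ and $y \to \infty$ contribute $-\infty$ and $0 \le \hat C(k,c)$ respectively and so do not affect the supremum. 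For the reverse inequality I would take $y = \IBS(k,c)$: then $\BS(k,y) = c$, the penalty term vanishes, and $g_y(c) = \BS(k, 2k/\IBS(k,c)) = \hat C(k,c)$ by Proposition \ref{th:inv}. Thus the supremum is attained at $y = \IBS(k,c)$ and equals $\hat C(k,c)$.

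There is no genuinely hard step once Propositions \ref{th:inv} and \ref{th:inv2} are in hand; the only point requiring a moment's care is the behaviour of the bracket at the boundary of the range $y \in [0,\infty)$, and as noted this boundary is harmless. The whole argument is just the tangent-line characterisation of a convex function, transported through the change of variables $c = \BS(k,y)$.
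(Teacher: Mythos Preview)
Your proof is correct and follows essentially the same approach as the paper: both use the convexity of $\hat C(k,\cdot)$ from Proposition \ref{th:inv2} to write down the tangent-line inequality $\hat C(k,c) \ge \hat C(k,c^*) - \tfrac{2k}{\IBS(k,c^*)^2}(c - c^*)$, then substitute $y = \IBS(k,c^*)$ and use Proposition \ref{th:inv} to rewrite $\hat C(k,c^*)$ as $\BS(k,2k/y)$. Your version is somewhat more explicit in verifying that the supremum is attained and in handling the boundary of the range of $y$, points which the paper leaves implicit.
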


\begin{proof}
Recall that by Proposition \ref{th:inv2} that $\hat C(k, \cdot)$ is convex.  Hence
$$
\hat C(k, c) - \hat C(k, c^*) \ge -\frac{2k}{ \IBS(k, c^*)^2} ( c - c^*).
$$
for any $c, c^* \in (0,1)$.  Letting $y = \IBS(k, c^*)$ we have 
$$
\hat C(k, c) \ge  \BS\left(k, \frac{2k}{y} \right)  -\frac{2k}{y^2} ( c - \BS(k,y))
$$
as claimed.
\end{proof}

Of course, there are other representations of $\IBS$ in terms of an optimisation problems.
For instance, 
we have
\begin{align*}
\IBS(k,c) &= \inf\{ y  \ge 0:  \BS(k,y) \ge c \} \\
& = \sup\{ y \ge 0:  \BS(k,y) \le c \}.
\end{align*}
Indeed, this simple observation underlies the bisection method discussed in the introduction.

We conclude this section with a slightly more interesting representation. It   be can used to
find upper and lower bounds of $\IBS(k,c)$, at least in principle.  However, 
in practice it is not clear how to choose candidate controls, so 
we do not explore this idea in the sequel.  This result is due to
Manaster \& Koehler \cite{manaster-koehler}, and is motivated by the Newton--Raphson method for
computing implied volatility numerically.

\begin{proposition} [Manaster \& Koehler] Fix $k \ge 0$ and $0 \le c < 1$.  
If $c \le \BS(k, \sqrt{2k}) = 1/2 - e^k \Phi(-  \sqrt{2k} )$ then
$$
\IBS(k, c) = \inf_{0 \le y \le \sqrt{2k}} \left[ y + \frac{c - \BS(k, y)}{\phi( -k/y+y/2)} \right].
$$
Otherwise, if $c \ge    1/2 - e^k \Phi(- \sqrt{2k} )$ then
$$
\IBS(k, c) = \sup_{  y \ge \sqrt{2k}}\left[ y + \frac{c - \BS(k, y)}{\phi( -k/y+y/2)} \right].
$$
\end{proposition}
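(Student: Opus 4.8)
The plan is to recognise the bracketed quantity $y + \dfrac{c - \BS(k,y)}{\phi(-k/y+y/2)}$ as a single Newton--Raphson step for the equation $\BS(k,\cdot) = c$, and then to exploit the convexity/concavity of $y \mapsto \BS(k,y)$: for a convex (resp.\ concave) function the Newton tangent line lies below (resp.\ above) the graph, which pins the Newton step on one definite side of the root.

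First I would pin down the shape of $y \mapsto \BS(k,y)$ for fixed $k \ge 0$. By the vega formula recalled in the proof of Proposition~\ref{th:inv}, its derivative is $v(y) := \phi(-k/y+y/2) > 0$ for $y > 0$. Differentiating again and using $\phi'(x) = -x\phi(x)$, the sign of $v'(y)$ coincides with that of $k/y - y/2$, so $v$ increases on $(0,\sqrt{2k})$ and decreases on $(\sqrt{2k},\infty)$; equivalently $\BS(k,\cdot)$ is convex on $[0,\sqrt{2k}]$ and concave on $[\sqrt{2k},\infty)$. Since $-k/y+y/2$ vanishes at $y=\sqrt{2k}$, the value at the inflection point is $\BS(k,\sqrt{2k}) = \tfrac12 - e^k\Phi(-\sqrt{2k})$, matching the threshold in the statement.

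Next, set $y^* = \IBS(k,c)$, so $\BS(k,y^*) = c$; as $\BS(k,\cdot)$ is strictly increasing and continuous, the hypothesis $c \le \BS(k,\sqrt{2k})$ gives $y^* \le \sqrt{2k}$, and $c \ge \BS(k,\sqrt{2k})$ gives $y^* \ge \sqrt{2k}$. In the first case, convexity of $\BS(k,\cdot)$ on $[0,\sqrt{2k}]$ yields the supporting-line bound $c = \BS(k,y^*) \ge \BS(k,y) + v(y)(y^* - y)$ for every $y \in [0,\sqrt{2k}]$; dividing by $v(y) > 0$ and rearranging gives $y + \frac{c-\BS(k,y)}{v(y)} \ge y^*$, with equality at $y = y^*$, whence the claimed infimum. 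The second case is the mirror image: concavity on $[\sqrt{2k},\infty)$ reverses the inequality, so $y + \frac{c-\BS(k,y)}{v(y)} \le y^*$ for all $y \ge \sqrt{2k}$, again with equality at $y = y^*$, giving the claimed supremum.

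The only delicate points are the endpoints and the degenerate case $k=0$ (where $\sqrt{2k}=0$). As $y\downarrow 0$ one has $v(y)\downarrow 0$ while $c-\BS(k,y)\to c$, so the bracket blows up to $+\infty$ when $c>0$ and the infimum is unaffected; when $c=0$ the bracket is read as a limit, and the supporting-line inequality still forces it to be $\ge 0 = y^*$. For $k=0$ the first regime degenerates to the identity $\IBS(0,0)=0$, while the second is covered by applying the concavity argument on all of $(0,\infty)$. I expect the convexity/concavity step to be routine once the vega formula is in hand, and that the only genuine (if mild) obstacle is disposing of these boundary cases cleanly.
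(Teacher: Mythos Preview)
Your proposal is correct and follows exactly the same route as the paper: both arguments establish convexity of $\BS(k,\cdot)$ on $[0,\sqrt{2k}]$ (and concavity beyond), apply the tangent-line inequality $\BS(k,y^*)-\BS(k,y)\ge \phi(-k/y+y/2)(y^*-y)$, and rearrange. Your write-up is in fact more careful than the paper's, which confirms convexity ``by differentiation'' without recording the computation and does not address the endpoint or $k=0$ issues you flag.
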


\begin{proof}  The restriction of 
$\BS(k, \cdot)$ to $[0, \sqrt{2k} ]$ is convex, as can be confirmed by differentiation.  Hence, by the Black--Scholes
vega formula, we have 
$$
\BS(k, y^*) - \BS(k, y) \ge \phi( - k/y + k/y)(y^* - y)
$$
for any $y, y^* \in [0, \sqrt{2k} ]$.  Fixing $y^*$ and letting $c= \BS(k, y^*)$  we have
proven
$$
y^* \le  y + \frac{c - \BS(k, y)}{\phi( -k/y+y/2)}
$$
as desired.  
 Similarly, since the restriction of $\BS(k, \cdot)$ to $[ \sqrt{2k}, \infty)$ is concave the 
second conclusion follows.
\end{proof}

\section{Uniform bounds and asymptotics}\label{se:asym}
In this section, we will offer quick proofs of some asymptotic formulae for the function $\IBS$.  
These formulae already appear in the literature, but the important novelty here is that
we will derive bounds on the function $\IBS$ which hold uniformly, not just asymptotically.
To obtain upper bounds in most cases, we simply choose a convenient $d_1$ or $d_2$ to plug into Theorem \ref{th:main}.
 Note that the proposed upper bound is close to the true value of $\IBS(c,k)$ when, for instance, the proposed value 
of $d_1$ is close to the minimiser $d_1^* = - k/y +y/2$.
In principle, lower bounds could be found by choosing convenient controls into 
Theorem \ref{th:lowerbound}.  However, in practice, we have found other arguments, while
   lacking the same   unifying principle, which do have the advantage of being simple.
In the proofs that follow, we  usually only consider the $k \ge 0$ case, as the $k<0$
case follows directly from Proposition \ref{th:sym}. 

Before we begin, 
 we need a lemma regarding the asymptotic behaviour of the standard normal quantile function $\Phi^{-1}$.
\begin{lemma}\label{th:asym} As $\varepsilon \downarrow 0$ we have
$$
\left[\Phi^{-1}(\varepsilon)\right]^2 =  -2 \log \varepsilon + O( \log( - \log \varepsilon) ).
$$
 In particular, we have
$$
\Phi^{-1}(\varepsilon) = - \sqrt{ -2 \log \varepsilon} + O\left( \frac{\log( - \log \varepsilon)}{ \sqrt{-\log \varepsilon} }   \right)
$$
\end{lemma}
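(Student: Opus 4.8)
The plan is to extract the asymptotics of $\Phi^{-1}(\varepsilon)$ from the classical Mills-ratio tail estimate for the standard normal distribution. Recall that for $x > 0$ one has the two-sided bound
$$
\frac{x}{1+x^2}\,\phi(x) \le \Phi(-x) \le \frac{1}{x}\,\phi(x),
$$
so that $\Phi(-x) = \phi(x)/x \cdot (1+o(1))$ as $x \to \infty$. Writing $x = -\Phi^{-1}(\varepsilon) \to \infty$ as $\varepsilon \downarrow 0$, we get $\varepsilon = \Phi(-x)$, and taking logarithms in the Mills bound yields
$$
\log \varepsilon = -\frac{x^2}{2} - \log x - \tfrac12 \log(2\pi) + O\!\left(\frac{1}{x^2}\right).
$$

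The next step is to invert this relation. Multiplying by $-2$ gives $x^2 = -2\log\varepsilon - 2\log x + O(1)$. Since $x^2 \le -2\log\varepsilon + O(1)$ from the upper bound alone, we have $x \le \sqrt{-2\log\varepsilon}\,(1+o(1))$, hence $\log x = \tfrac12 \log(-\log\varepsilon) + O(1) = \tfrac12\log(-\log\varepsilon)+ O(\log(-\log\varepsilon))$ — more crudely, $\log x = O(\log(-\log\varepsilon))$. Substituting back,
$$
x^2 = -2\log\varepsilon + O(\log(-\log\varepsilon)),
$$
which is the first displayed claim once we recall $x^2 = [\Phi^{-1}(\varepsilon)]^2$.

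For the second display, take square roots: since $\Phi^{-1}(\varepsilon) = -x < 0$,
$$
\Phi^{-1}(\varepsilon) = -\sqrt{-2\log\varepsilon + O(\log(-\log\varepsilon))} = -\sqrt{-2\log\varepsilon}\,\sqrt{1 + O\!\left(\frac{\log(-\log\varepsilon)}{-\log\varepsilon}\right)}.
$$
Using $\sqrt{1+t} = 1 + O(t)$ for small $t$ and noting $\frac{\log(-\log\varepsilon)}{-\log\varepsilon} \to 0$, this expands to
$$
\Phi^{-1}(\varepsilon) = -\sqrt{-2\log\varepsilon} + O\!\left(\sqrt{-\log\varepsilon}\cdot\frac{\log(-\log\varepsilon)}{-\log\varepsilon}\right) = -\sqrt{-2\log\varepsilon} + O\!\left(\frac{\log(-\log\varepsilon)}{\sqrt{-\log\varepsilon}}\right),
$$
as required.

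The only real subtlety — the step I expect to be the main obstacle — is being careful with the bootstrapping in the inversion: one must first get a rough one-sided bound on $x$ (from the easy half of the Mills inequality) in order to control the $\log x$ term before feeding it back to obtain the sharp two-sided estimate. Everything else is routine manipulation of logarithms and a first-order Taylor expansion of the square root, with the error terms absorbed into $O(\log(-\log\varepsilon))$ at the level of $[\Phi^{-1}]^2$ and into $O(\log(-\log\varepsilon)/\sqrt{-\log\varepsilon})$ at the level of $\Phi^{-1}$ itself.
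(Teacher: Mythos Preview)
Your proof is correct and follows essentially the same route as the paper: set $x=-\Phi^{-1}(\varepsilon)$, use the Mills ratio to write $\log\varepsilon=-x^2/2-\log x+O(1)$, bootstrap a crude bound on $x$ to control the $\log x$ term, and then take square roots via Taylor's theorem. The paper packages the Mills step as $R(x)=\Phi(-x)x/\phi(x)\to 1$ rather than the explicit two-sided inequality, but the argument is the same.
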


\begin{proof}
Let $\varepsilon = \Phi(-x)$ for  large $x >0$ and let
$$
R(x) = \frac{ \Phi(-x) x}{\phi(x)}.
$$ 
In this notation we have the identity
$$ 
\log \Phi(-x) = - x^2/2   -  \log( \sqrt{2\pi} x) +  \log R(x).
$$
Since it is well known that $R(x) \to 1$ as $x \to \infty$ we have
$$
\frac{\log \Phi(-x)}{x^2} \to -1/2   
$$
or equivalently
$$
\frac{[\Phi^{-1}(\varepsilon)]^2}{\log \varepsilon} \to -2.
$$
Plugging in this limit into the identity yields the first conclusion, and Taylor's theorem yields the second.
\end{proof}

The first example comes from \cite{JAP09}.  
This asymptotic formula considers the behaviour of $\IBS$ when $c$ is close to its upper bound of $1$. 
This result is useful in studying implied
volatility at very long maturities, when the strike is fixed.
\begin{theorem}\label{th:jap} For fixed $k \in \RR$, we have
$$
\IBS(k,c) = \sqrt{- 8  \log (1-c)} + O \left( \frac{ \log[- \log (1-c) ]}{\sqrt{ - \log(1-c)} } \right)
$$
as $c \uparrow 1$. 
\end{theorem}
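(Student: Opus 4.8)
The plan is to trap $\IBS(k,c)$ between two elementary bounds that share the leading term $\sqrt{-8\log(1-c)}$, and then to read off the asymptotics of each bound from Lemma~\ref{th:asym}. Throughout we take $c$ close enough to $1$ that $1-c < e^{k}$; this is harmless since $k$ is fixed.

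\emph{Upper bound.} I would feed the control $d_2 = \Phi^{-1}\big(\tfrac{1-c}{1+e^{k}}\big)$ into the second representation in Theorem~\ref{th:main}. Since $c + e^{k}\Phi(d_2) = \tfrac{c+e^{k}}{1+e^{k}} = 1 - \tfrac{1-c}{1+e^{k}}$, we get $\Phi^{-1}\big(c+e^{k}\Phi(d_2)\big) = -d_2$, hence
$$
\IBS(k,c) \;\le\; H_2(d_2;k,c) \;=\; -2\,\Phi^{-1}\!\left(\frac{1-c}{1+e^{k}}\right),
$$
which is exactly \eqref{eq:firstbound}.

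\emph{Lower bound.} Here I would use the first representation in Theorem~\ref{th:main}. Because $1-c < e^{k}$, the argument $e^{-k}(\Phi(d_1)-c)$ of $\Phi^{-1}$ never reaches $1$; moreover $H_1(d_1;k,c) = +\infty$ unless $\Phi(d_1) > c$, so the infimum is really taken over $d_1 > \Phi^{-1}(c)$. On that range, monotonicity of $\Phi^{-1}$ together with $\Phi(d_1) \le 1$ gives $\Phi^{-1}\big(e^{-k}(\Phi(d_1)-c)\big) \le \Phi^{-1}\big(e^{-k}(1-c)\big)$, whence
$$
\IBS(k,c) \;=\; \inf_{d_1 > \Phi^{-1}(c)} H_1(d_1;k,c) \;\ge\; \Phi^{-1}(c) - \Phi^{-1}\big(e^{-k}(1-c)\big) \;=\; -\Phi^{-1}(1-c) - \Phi^{-1}\big(e^{-k}(1-c)\big),
$$
using $\Phi^{-1}(1-u) = -\Phi^{-1}(u)$.

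\emph{Conclusion.} Finally I would apply Lemma~\ref{th:asym} to each of the arguments $1-c$, $e^{-k}(1-c)$ and $\tfrac{1-c}{1+e^{k}}$; each differs from $1-c$ only by a fixed multiplicative constant, so $-\log\varepsilon = -\log(1-c) + O(1)$ and hence $\sqrt{-2\log\varepsilon} = \sqrt{-2\log(1-c)} + O\big((-\log(1-c))^{-1/2}\big)$, while the error term of the lemma is $O\big(\tfrac{\log[-\log(1-c)]}{\sqrt{-\log(1-c)}}\big)$. Consequently both $-2\Phi^{-1}\big(\tfrac{1-c}{1+e^{k}}\big)$ and $-\Phi^{-1}(1-c)-\Phi^{-1}\big(e^{-k}(1-c)\big)$ equal
$$
2\sqrt{-2\log(1-c)} + O\!\left(\frac{\log[-\log(1-c)]}{\sqrt{-\log(1-c)}}\right) = \sqrt{-8\log(1-c)} + O\!\left(\frac{\log[-\log(1-c)]}{\sqrt{-\log(1-c)}}\right),
$$
and the claimed formula follows by squeezing. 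The only mildly delicate points are keeping track of the domain $d_1 > \Phi^{-1}(c)$ in the infimum for the lower bound, and checking that the $O(1)$ shifts inside the logarithms are genuinely absorbed by the stated error term; neither is a serious obstacle.
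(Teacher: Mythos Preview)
Your argument is correct, and the overall scheme (sandwich $\IBS(k,c)$ between two explicit bounds and apply Lemma~\ref{th:asym}) is exactly the paper's. Your upper bound is identical to the paper's: both feed $d_2=\Phi^{-1}\big(\tfrac{1-c}{1+e^k}\big)$ into Theorem~\ref{th:main} to obtain \eqref{eq:firstbound}, which is the upper half of Proposition~\ref{th:cto1}.

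The lower bound, however, is obtained differently. The paper does not bound $H_1$; instead it exploits that $\BS(\cdot,y)$ is decreasing, so for $k\ge 0$ one has $\BS(0,y)\ge\BS(k,y)=c$, giving $\IBS(k,c)\ge\IBS(0,c)=-2\,\Phi^{-1}\big(\tfrac{1-c}{2}\big)$ directly (and then invokes put--call symmetry for $k<0$). Your route---bounding the subtracted term in $H_1$ by $\Phi^{-1}(e^{-k}(1-c))$ and then pushing $d_1\downarrow\Phi^{-1}(c)$---yields the slightly different lower bound $-\Phi^{-1}(1-c)-\Phi^{-1}(e^{-k}(1-c))$. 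The paper's bound is sharper (it is exact at $k=0$), but both are of the form $-\Phi^{-1}(\alpha(1-c))-\Phi^{-1}(\beta(1-c))$ for fixed $\alpha,\beta>0$, so both deliver the same leading term $\sqrt{-8\log(1-c)}$ with the same error after Lemma~\ref{th:asym}. Your approach has the minor advantage of handling all $k$ at once without a separate appeal to put--call symmetry; the paper's has the advantage of producing a bound that is visibly uniform in $c$ without needing the hypothesis $1-c<e^k$.
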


The proof of the above theorem relies the following simple bounds which hold uniformly in $(c,k)$.

\begin{proposition} \label{th:cto1}  Fix $k \in \RR$ and $(1-e^k)^+ \le c < 1$.
For $k \ge 0$ we have
$$
 - 2 \Phi^{-1}\left(\frac{1-c }{2   }\right) \le   \IBS(k,c ) \le -2  \Phi^{-1} \left( \frac{1-c}{1+e^{ k}} \right)
$$
and for $k < 0$ we have
$$
 - 2 \Phi^{-1}\left(\frac{1-c }{2e^k   }\right) \le   \IBS(k,c ) \le -2  \Phi^{-1} \left( \frac{1-c}{1+e^{ k}} \right).
$$
\end{proposition}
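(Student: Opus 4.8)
The plan is to get the upper bound from Theorem \ref{th:main} by inserting a single well-chosen control, to get the lower bound for $k\ge 0$ by comparing $\BS(k,\cdot)$ with the explicitly invertible function $\BS(0,\cdot)$, and to deduce the case $k<0$ from the case $k\ge 0$ via the put-call symmetry of Proposition \ref{th:sym}.

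First I would establish the upper bound, which I claim can be proved for all $k\in\RR$ simultaneously. Set $y_u = -2\Phi^{-1}\!\big(\tfrac{1-c}{1+e^k}\big)$; since $0 < \tfrac{1-c}{1+e^k} < 1$ this is a well-defined nonnegative number (for $k<0$ one uses $c\ge 1-e^k$ to see the argument is $\le \tfrac{e^k}{1+e^k}<1$). Apply the second representation in Theorem \ref{th:main} with the control $d_2 = \Phi^{-1}\!\big(\tfrac{1-c}{1+e^k}\big) = -y_u/2$. A one-line computation gives
$$ c + e^k\Phi(d_2) = c + e^k\cdot\frac{1-c}{1+e^k} = \frac{c+e^k}{1+e^k} = 1 - \frac{1-c}{1+e^k} = \Phi(y_u/2), $$
and therefore $\IBS(k,c) \le \Phi^{-1}\big(c+e^k\Phi(d_2)\big) - d_2 = y_u/2 + y_u/2 = y_u$, which is the stated upper bound.

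Next I would prove the lower bound for $k\ge 0$. Put $y_\ell = -2\Phi^{-1}\!\big(\tfrac{1-c}{2}\big)\ge 0$, so that $\Phi(-y_\ell/2)=\tfrac{1-c}{2}$ and hence $\BS(0,y_\ell) = 1 - 2\Phi(-y_\ell/2) = c$ by the explicit formula for $\BS(0,\cdot)$ from the introduction. From the integral representation $\BS(k,y) = \int_{-\infty}^{\infty}(e^{yz-y^2/2}-e^k)^+\phi(z)\,\d z$ it is immediate that $k\mapsto\BS(k,y)$ is non-increasing, so for $k\ge 0$ we get $\BS(k,y_\ell)\le\BS(0,y_\ell)=c$; since $\BS(k,\cdot)$ is strictly increasing this yields $y_\ell\le\IBS(k,c)$, as claimed. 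Finally, for $k<0$, Proposition \ref{th:sym} gives $\IBS(k,c)=\IBS(-k,c')$ with $c' = e^{-k}c+1-e^{-k}$, so $-k>0$ and $1-c'=e^{-k}(1-c)$; substituting $-k$ for $k$ in the two bounds already proved and using $\tfrac{1-c'}{2}=\tfrac{1-c}{2e^k}$ and $\tfrac{1-c'}{1+e^{-k}}=\tfrac{1-c}{1+e^k}$ reproduces exactly the two inequalities stated for $k<0$ (and, consistently, re-derives the same upper bound).

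I expect the only step needing any ingenuity is the choice of the control $d_2 = \Phi^{-1}\!\big(\tfrac{1-c}{1+e^k}\big)$ that makes the right-hand side of Theorem \ref{th:main} collapse to a closed-form expression; once that is found, the remainder is bookkeeping together with appeals to Proposition \ref{th:sym} and the monotonicity of $\BS$ in its first argument.
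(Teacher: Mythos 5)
Your proposal is correct and follows essentially the same route as the paper: the upper bound comes from inserting the control $d_2=\Phi^{-1}\bigl(\tfrac{1-c}{1+e^k}\bigr)$ into Theorem \ref{th:main}, the lower bound for $k\ge 0$ from the monotonicity of $\BS(\cdot,y)$ together with the explicit formula for $\BS(0,\cdot)$, and the $k<0$ case from Proposition \ref{th:sym} with the same algebraic identities. You merely spell out the collapse of the objective function to a closed form, which the paper leaves implicit.
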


\begin{proof}  For the upper bound, let $d_2 = \Phi^{-1} \left( \frac{1-c}{1+e^k} \right) $ in
 Theorem \ref{th:main}.

For the lower bound, let  $y = \IBS(k,c)$.  Note that
$\BS( \cdot, y)$ is decreasing and hence
\begin{align*}
1 - 2 \Phi( -y/2) & = \BS( 0, y ) \\
& \ge \BS(k,  y) \\
& =   c
\end{align*}
when $k \ge 0$.   In the case when $k < 0$, note that 
$$
\frac{1-e^{-k} p}{1+e^{-k}} = \frac{1-c }{1+  e^k}
$$
and that 
$$
\frac{1-e^{-k} p}{2} = \frac{1-c }{2  e^k}.
$$
Now appeal to the put-call parity formula of Proposition \ref{th:sym}.
 \end{proof}

\begin{proof} [Proof of Theorem \ref{th:jap}] By Proposition \ref{th:cto1} and  Lemma \ref{th:asym}, we have
\begin{align*}
\IBS(k,c ) &\le -2  \Phi^{-1} \left( \frac{1-c}{1+e^k} \right) \\
& = \sqrt{- 8  \log (1-c)} + O \left( \frac{ \log[- \log (1-c) ]}{\sqrt{ - \log(1-c)} } \right)
\end{align*}
where we have used the fact that for fixed $k$ we have
$$
\sqrt{ -2 \log\left(\frac{1-c}{1+e^k}\right)} =  \sqrt{ -2 \log(1-c)} + O( \frac{1}{\sqrt{-\log(1- c)}} )
$$
as $c \uparrow 1$.

Similarly, by Proposition \ref{th:cto1}, we have for $k \ge 0$ that
\begin{align*}
 \IBS(k,c ) &\ge -2  \Phi^{-1} \left( \frac{1-c}{2} \right) \\
& = \sqrt{- 8  \log (1-c)} + O \left( \frac{ \log[- \log (1-c) ]}{\sqrt{ - \log(1-c)} } \right).
\end{align*}
The $k <0$ is identical.
\end{proof}

Figure \ref{fi:long} illustrates the behaviour of $\IBS(k,c)$ as $c \uparrow 1$, compared
with the uniform upper and lower bounds of Proposition \ref{th:cto1} and the asymptotic
formula in Theorem \ref{th:jap}.   We fixed the log-moneyness $k=0.2$ and plotted four
functions: 
\begin{enumerate}
\item
 $Y_{\mathrm{upper}}(c) =  -2 \Phi^{-1}\left(\frac{1-c}{1+e^k}\right)$ is the upper bound from Proposition \ref{th:cto1};
\item
  $Y_*(c) = \IBS(k,c)$ is the true function of our interest; 
	\item 
	$Y_{\mathrm{lower}}(c) =   -2 \Phi^{-1}\left(\frac{1-c}{2}\right)$ is the lower bound from Proposition \ref{th:cto1}; 
	\item  $Y_{\mathrm{asym}}(c) =  \sqrt{-8 \log(1-c)}$ is the asymptotic shape from Theorem \ref{th:jap}. 
	\end{enumerate}
	Note that $Y_{\mathrm{upper}} \ge Y_* \ge Y_{\mathrm{lower}}$ as predicted.  Also,
	it is interesting to see that $Y_{\mathrm{lower}}$ is remarkably good 
	approximation over large range of $c$.  Finally,  note that $Y_{\mathrm{asym}} \ge Y_{\mathrm{upper}}$ for this range of $c$.
	Indeed, $Y_{\mathrm{asym}} $ is a rather poor approximation of $Y_*$ for realistic values
	of the normalised call price $c$ due to the fact that the error
	term $\log(-\log(1-c))/\sqrt{-\log(1-c)}$ is actually \textit{increasing} for $c < 1-e^{-e^2} = 0.9994$!
	 
\begin{figure}
\caption{Bounds and asymptotics of $\IBS(k, \cdot)$ as $c \uparrow 1$.}
	  \label{fi:long}
		\includegraphics[trim = 0cm 0 0cm 0, clip, scale = 0.38]{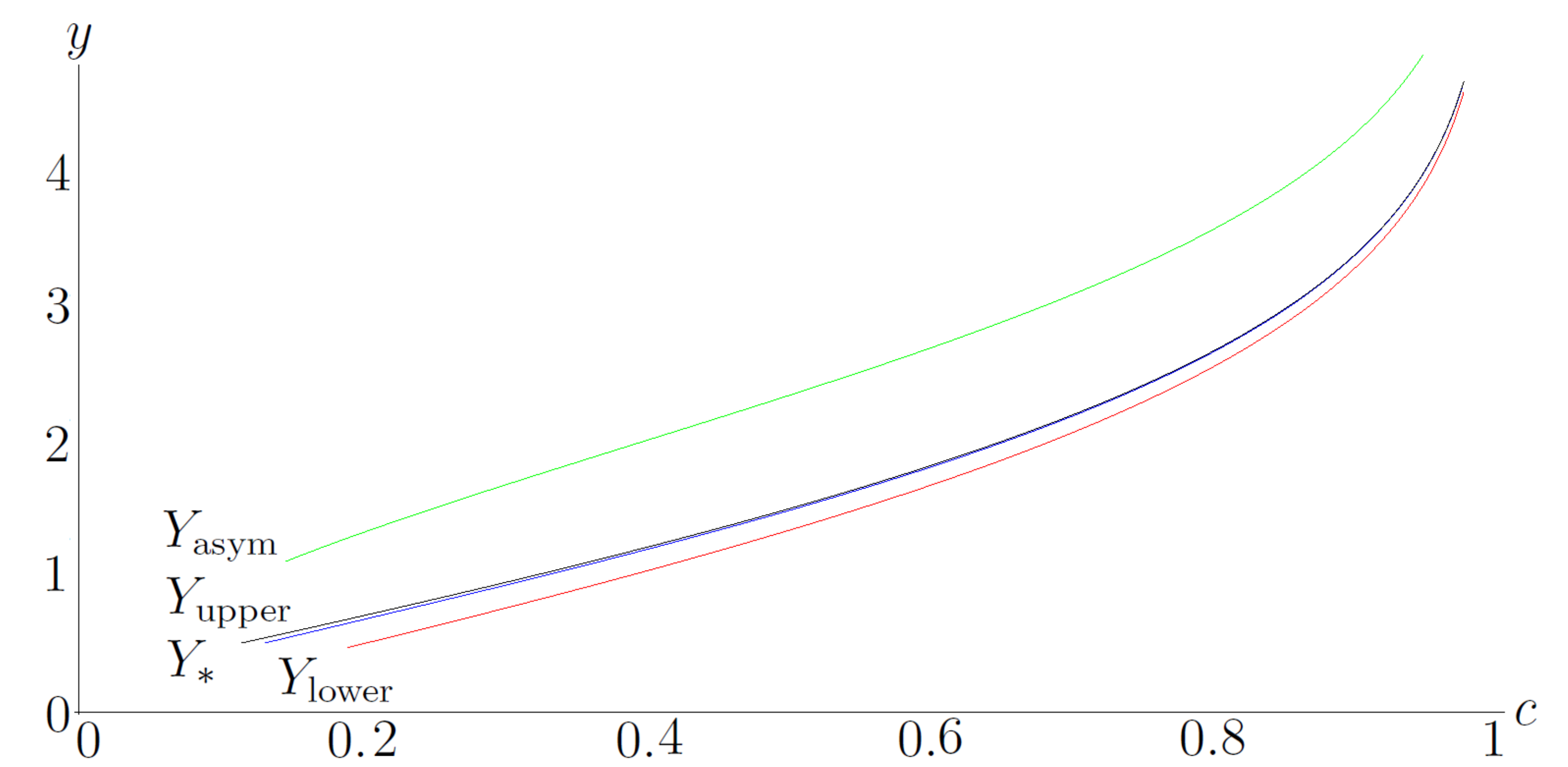}		
	\end{figure} 

The next example we consider in this section is due to Roper \& Rutkowski \cite{RR}
and deals with the case where  $c$ is close 
to its lower bound $(1-e^k)^+$.  In particular, this regime is useful for studying 
the implied volatility smile of options very near maturity.

\begin{theorem}[Roper \& Rutkowski] \label{th:RR}If $k > 0$ then
$$
\IBS(k,c) = \frac{ k}{\sqrt{ - 2 \log c}} + O\left( \frac{ \log(-\log c)}{(-\log c)^{3/2}} \right)
$$
as $c \downarrow 0$.  If $k < 0$ then
$$
\IBS(k,c) = \frac{ -k}{\sqrt{ - 2 \log p}} + O\left( \frac{ \log(-\log p)}{(-\log p)^{3/2}} \right)
$$
as $c \downarrow 1-e^k$,  
where $p = c + e^k - 1$.
\end{theorem}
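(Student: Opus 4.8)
The plan is to reduce to the case $k>0$ with $c\downarrow 0$ by put--call symmetry, then to establish a sharp logarithmic estimate for $\BS(k,\cdot)$ near $y=0$, and finally to invert it. By Proposition~\ref{th:sym} we have $\IBS(k,c)=\IBS(-k,\,e^{-k}p)$, where $p=c+e^k-1$ is the normalised put price (note $e^{-k}c+1-e^{-k}=e^{-k}p$). Hence the $k<0$ assertion, in which $c\downarrow 1-e^k$ and therefore $p\downarrow 0$, reduces to the $k>0$ assertion applied with log-moneyness $-k>0$ and call price $e^{-k}p\downarrow 0$; since $-\log(e^{-k}p)=-\log p+k$, replacing $-\log p$ by $-\log p+k$ perturbs the conclusion only by $O\big((-\log p)^{-3/2}\big)$, which the stated error term absorbs. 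So from now on take $k>0$.

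The crux is the estimate
\[
\log\BS(k,y)=-\frac{k^2}{2y^2}+O\big(\log(1/y)\big)\qquad\text{as }y\downarrow 0,
\]
which is the estimate referred to in the Remark following Proposition~\ref{th:inv}. For the upper bound, on the set $\{z>k/y+y/2\}$ where $(e^{yz-y^2/2}-e^k)^+$ is positive bound the integrand by $e^{yz-y^2/2}$, and by $0$ elsewhere; using $e^{yz-y^2/2}\phi(z)=\phi(z-y)$ this gives $\BS(k,y)\le\Phi(-k/y+y/2)\le\tfrac12 e^{-(k/y-y/2)^2/2}$, whence $\log\BS(k,y)\le -k^2/(2y^2)+k/2$. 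For the lower bound I would use the vega formula $\BS(k,y)=\int_0^y\phi(-k/x+x/2)\,\d x$ recorded in the proof of Proposition~\ref{th:inv}, restricting the integral to the \emph{shrinking} window $x\in[y(1-y^2),y]$: on this window $k/x-x/2=k/y+O(y)$, so $\phi(-k/x+x/2)\ge \tfrac1{\sqrt{2\pi}}e^{-k^2/(2y^2)+O(1)}$, and, the window having length $y^3$, $\BS(k,y)\ge \tfrac{y^3}{\sqrt{2\pi}}e^{-k^2/(2y^2)+O(1)}$, i.e.\ $\log\BS(k,y)\ge -k^2/(2y^2)+3\log y+O(1)$. (Equivalently one may bound $\Phi(-k/y+y/2)-e^k\Phi(-k/y-y/2)$ below by a sufficiently long truncation of the Mills-ratio expansion; either way the subtlety is to keep the exponential order $e^{-k^2/(2y^2)}$ and not lose a multiplicative constant in the exponent.)

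To conclude, put $c=\BS(k,y)$ and $y=\IBS(k,c)$, so that $-\log c=k^2/(2y^2)+O(\log(1/y))$. Since $\log(1/y)=o(1/y^2)$, a first crude pass shows $y=\Theta\big((-\log c)^{-1/2}\big)$ as $c\downarrow 0$, hence $\log(1/y)=O(\log(-\log c))$; feeding this back gives $k^2/(2y^2)=-\log c+O(\log(-\log c))$, so
\[
y^2=\frac{k^2}{-2\log c}\left(1+O\!\left(\frac{\log(-\log c)}{-\log c}\right)\right),
\]
and taking square roots with $(1+x)^{1/2}=1+O(x)$ yields $\IBS(k,c)=\dfrac{k}{\sqrt{-2\log c}}+O\!\big(\log(-\log c)\,(-\log c)^{-3/2}\big)$, as required. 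I expect the lower bound in the previous paragraph to be the only genuine obstacle: a careless truncation — for instance restricting the vega integral to $[y/2,y]$ — replaces $k^2/(2y^2)$ by $2k^2/y^2$ and wrecks the estimate, so the interval must be allowed to shrink with $y$. (One could instead pass to $\hat C(k,c)\uparrow 1$ via Proposition~\ref{th:inv} and invoke Theorem~\ref{th:jap}, but that route needs the same estimate, now at the point $2k/y$, so nothing is saved.)
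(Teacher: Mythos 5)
Your proof is correct, but it takes a genuinely different route from the paper's. The paper never expands $\log\BS(k,y)$ for small $y$: it composes its own uniform ingredients --- the close--far symmetry of Proposition~\ref{th:inv}, the sandwich $1-c\,L(k,c)\le\hat C(k,c)\le 1-c$ of Proposition~\ref{th:hatbound}, and the $c\uparrow1$ bounds of Proposition~\ref{th:cto1} --- to get the explicit two-sided estimate of Proposition~\ref{th:short1}, namely $k/\bigl(-\Phi^{-1}(c/(1+e^k))\bigr)\le\IBS(k,c)\le k/\bigl(-\Phi^{-1}(c\,L(k,c)/2)\bigr)$, and then feeds both sides into the quantile asymptotics of Lemma~\ref{th:asym}. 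What that buys is the paper's selling point: bounds of known sign valid for \emph{every} $(k,c)$, with the Roper--Rutkowski formula falling out as a corollary. Your direct attack --- squeezing $\log\BS(k,y)=-k^2/(2y^2)+O(\log(1/y))$ between $\Phi(-k/y+y/2)\le\tfrac12 e^{-(k/y-y/2)^2/2}$ above and a shrinking-window truncation of the vega integral below, then inverting by a two-pass bootstrap --- is self-contained, elementary, and closer in spirit to Roper and Rutkowski's original argument; your point that the window must shrink with $y$ is exactly the right subtlety (the choice $[y(1-y^2),y]$ keeps $(k/x-x/2)^2=k^2/y^2+O(1)$ at a cost of only $3\log y$), and the bootstrap correctly upgrades $y=\Theta((-\log c)^{-1/2})$ to the stated error term. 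The price of your route is that the constants hidden in the $O(1)$ are not tracked, so you obtain only the asymptotic statement and none of the uniform, signed bounds the paper is organised around. Your reduction of the $k<0$ case via $\IBS(k,c)=\IBS(-k,e^{-k}p)$, absorbing the shift $-\log(e^{-k}p)=-\log p+k$ into the error, matches the paper's appeal to Proposition~\ref{th:sym} and is fine.
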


As always, we will prove the asymptotic result by finding uniform  bounds.
As discussed in Section \ref{se:sym}, we can reuse of the bounds which
are tight when $c$ is close to 1 by first bounding the function $\hat C$.

\begin{proposition}\label{th:hatbound}  For $k >0$ and $0 < c < 1$, we have
$$
1 - c \ L(k,c)  \le \hat C(k, c) \le  1-c
$$
where 
\begin{equation}\label{eq:epsilon}
L(k,c) = \frac{2}{k} \left[\Phi^{-1}\left( \frac{c}{1+e^k}\right)^2+2 \right].
\end{equation}
\end{proposition}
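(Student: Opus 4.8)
The plan is to treat the two inequalities separately. The upper bound $\hat C(k,c) \le 1-c$ is the easy one: the integrand in $1 - \hat C(k,c) = \int_0^c 2k/\IBS(k,u)^2\,du$ is decreasing in $u$ (since $\IBS(k,\cdot)$ is increasing), so its average over $(0,c)$ is at least its average over all of $(0,1)$, and the latter equals $1 - \hat C(k,1^-) = 1$ because $\hat C(k,1^-) = \BS(k, 2k/\IBS(k,1^-)) = \BS(k,0) = 0$ by Proposition \ref{th:inv}. Hence $1 - \hat C(k,c) \ge c$. (Equivalently, one could quote the convexity of $\hat C(k,\cdot)$ from Proposition \ref{th:inv2} together with its boundary values $1$ and $0$, so that its graph lies below the chord $c \mapsto 1-c$.)

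For the lower bound the idea is to recycle the \emph{upper} bound on $\IBS$ from Proposition \ref{th:cto1} by passing through the close--far symmetry. Using Proposition \ref{th:inv} in the form $2k/\IBS(k,u)^2 = \IBS(k,\hat C(k,u))^2/(2k)$, I would first write
$$1 - \hat C(k,c) = \frac{1}{2k}\int_0^c \IBS\big(k,\hat C(k,u)\big)^2\,du.$$
The upper bound just proved gives $\hat C(k,u) \le 1-u$, hence $1 - \hat C(k,u) \ge u$; feeding $v = \hat C(k,u)$ into $\IBS(k,v) \le -2\Phi^{-1}\big((1-v)/(1+e^k)\big)$ and using that $\Phi^{-1}$ is increasing then yields $0 \le \IBS(k,\hat C(k,u)) \le -2\Phi^{-1}\big(u/(1+e^k)\big)$. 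Squaring and integrating reduces everything to the elementary estimate
$$\int_0^c \Phi^{-1}\!\big(u/(1+e^k)\big)^2\,du \ \le\ c\Big[\Phi^{-1}\!\big(c/(1+e^k)\big)^2 + 2\Big].$$

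This last estimate I would establish by the substitutions $v = u/(1+e^k)$ and $v = \Phi(t)$, which (using $t\phi(t) = -\phi'(t)$ and one integration by parts) turn the left side into $c - (1+e^k)\,a\,\phi(a)$ with $a := \Phi^{-1}(c/(1+e^k)) < 0$; since $(1+e^k)\Phi(a) = c$, the claim becomes $-a\phi(a) \le (a^2+1)\Phi(a)$, i.e. the Gaussian tail bound $\Phi(-t) \ge \frac{t}{1+t^2}\phi(t)$ for $t \ge 0$. The latter holds because $\Phi(-t) - \frac{t}{1+t^2}\phi(t)$ has derivative $-2\phi(t)/(1+t^2)^2 < 0$ and tends to $0$ at $+\infty$. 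Assembling the pieces gives $1 - \hat C(k,c) \le \frac{2}{k}c[\Phi^{-1}(c/(1+e^k))^2 + 2] = cL(k,c)$.

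There is no serious obstacle here; the only points requiring care are bookkeeping the constant $1+e^k$ so that the factor produced by the substitution cancels the one inherited from Proposition \ref{th:cto1}, and noting that $a = \Phi^{-1}(c/(1+e^k)) < 0$ (which holds since $c/(1+e^k) < 1/2$) so that the squaring step is legitimate.
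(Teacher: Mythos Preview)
Your proof is correct and, for the lower bound, follows essentially the same route as the paper: rewrite $1-\hat C(k,c)=\frac{1}{2k}\int_0^c \IBS(k,\hat C(k,u))^2\,du$ via the close--far symmetry, use $\hat C(k,u)\le 1-u$ together with the upper bound of Proposition~\ref{th:cto1} to replace the integrand by $4\,\Phi^{-1}(u/(1+e^k))^2$, and finish with the Mills-type tail bound $\Phi(-t)\ge \tfrac{t}{1+t^2}\phi(t)$ (which is exactly the paper's inequality $A\phi(A)+\Phi(-A)\le (A^2+2)\Phi(-A)$ in disguise).

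The only genuine difference is the upper bound. The paper dispatches it in one line via the identity
\[
\BS(k,y)+\BS\!\left(k,\tfrac{2k}{y}\right)=1-2e^k\Phi\!\left(-\tfrac{k}{y}-\tfrac{y}{2}\right)\le 1,
\]
which immediately gives $c+\hat C(k,c)\le 1$. Your averaging/convexity argument (using that $\hat C(k,\cdot)$ is convex with boundary values $1$ and $0$, hence lies below the chord $1-c$) is a valid alternative; it is slightly less direct but has the virtue of not requiring any explicit Black--Scholes computation.
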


\begin{proof}
For the upper bound, simply note that 
$$
\BS(k,y) + \BS(k,2k/y)= 1 - 2 e^k \Phi(-k/y - y/2) \le 1.
$$

Now
\begin{align*}
\hat C(k, c) & = 1 - \int_0^c \frac{2k}{\IBS(k,u)^2} du \\
& = 1 - \int_0^c \frac{\IBS(k, \hat C(k,u) )^2}{2k} du \\
& \ge 1 - \int_0^c \frac{\IBS(k, 1-u )^2}{2k} du 
\end{align*}
by two applications of Proposition \ref{th:inv} and the upper bound.

Now, we appeal to the upper bound in Proposition \ref{th:cto1} to conclude that
\begin{align*}
\hat C(k, c) &\ge 1 -  \frac{2}{k}\int_0^c \Phi^{-1}\left(\frac{u}{1+e^k}\right)^2 du \\
& = 1  - \frac{2(1+e^k)}{k} \int_{-\Phi^{-1}\left(\frac{c}{1+e^k}\right)}^{\infty} x^2 \phi(x) dx.
\end{align*}
To complete the proof, note that the bound
\begin{align*}
\int_{A}^{\infty} x^2 \phi(x) dx & = A \phi(A) + \Phi(-A) \\
& \le  (A^2 + 2) \Phi(-A)
\end{align*}
which holds for all $A \ge 0$.
\end{proof}

We now prove an inequality which provides
an easy way to convert bounds which are good when $c \uparrow 1$ into
bounds which are good when $c \downarrow 0$.

\begin{proposition}\label{th:short1}  Fix $k > 0$ and $0 <  c < 1$. Then
$$
\frac{2k}{\IBS(k, 1-c)} \le \IBS(k,c) \le 
\frac{2k}{\IBS(k, 1- c \ L(k,c) )}.
$$
where $L(k,c)$ is defined by equation \eqref{eq:epsilon}.  In
particular,  we have
$$
\IBS(k,c) \ge \frac{k}{- \Phi^{-1}\left( \frac{c}{1+e^k} \right)} 
$$
and if $c \ L(k,c) \le 1$ we 
have
$$
\IBS(k,c) \le \frac{k}{- \Phi^{-1} \left( \frac{c \ L(k,c) }{2} \right)}.
$$
\end{proposition}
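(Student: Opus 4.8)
The plan is to combine the symmetry identity of Proposition \ref{th:inv} with the bounds on $\hat C$ from Proposition \ref{th:hatbound}. First, recall from Proposition \ref{th:inv} that for $k>0$ and $0<c<1$ one has $\IBS(k,c) = 2k / \IBS(k, \hat C(k,c))$ with $\hat C(k,c) \in (0,1)$, while Proposition \ref{th:hatbound} gives $1 - c\,L(k,c) \le \hat C(k,c) \le 1-c$. Since $k>0$ forces $(1-e^k)^+ = 0$, the arguments $1-c$ and $1 - c\,L(k,c)$ both lie in the admissible interval $[0,1)$ (using the hypothesis $c\,L(k,c)\le 1$ for the latter), so $\IBS(k,\cdot)$ is defined there.

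Second, I would invoke the fact, noted in the proof of Proposition \ref{th:inv2}, that $\IBS(k,\cdot)$ is strictly increasing. Applying this to the sandwich for $\hat C(k,c)$ yields $\IBS(k, 1 - c\,L(k,c)) \le \IBS(k, \hat C(k,c)) \le \IBS(k, 1-c)$. All three quantities are positive, so passing to reciprocals reverses the inequalities, and multiplying through by $2k>0$ gives precisely $\tfrac{2k}{\IBS(k,1-c)} \le \IBS(k,c) \le \tfrac{2k}{\IBS(k,1 - c\,L(k,c))}$, which is the displayed double inequality. (When $c\,L(k,c)=1$ the right-hand side reads $+\infty$ under the convention $\IBS(k,0)=0$, so the bound holds vacuously.)

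Third, for the two ``in particular'' estimates I would feed the bounds of Proposition \ref{th:cto1} into the double inequality just proved. For the lower bound on $\IBS(k,c)$, the $k\ge 0$ upper bound in Proposition \ref{th:cto1}, applied at normalised price $1-c$, gives $\IBS(k,1-c) \le -2\Phi^{-1}\!\big(\tfrac{c}{1+e^k}\big)$; the right-hand side is positive since $\tfrac{c}{1+e^k}<\tfrac12$, so dividing $2k$ by it produces $\IBS(k,c)\ge k / \big(-\Phi^{-1}(\tfrac{c}{1+e^k})\big)$. Symmetrically, for the upper bound on $\IBS(k,c)$ I would use the $k\ge 0$ lower bound of Proposition \ref{th:cto1} at normalised price $1 - c\,L(k,c)$, namely $\IBS(k, 1 - c\,L(k,c)) \ge -2\Phi^{-1}\!\big(\tfrac{c\,L(k,c)}{2}\big)$, valid exactly when $c\,L(k,c)\le 1$, and substitute it into the right half of the double inequality.

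There is no serious obstacle; the argument is a short chain of substitutions. The only points needing care are bookkeeping ones: checking that every normalised-price argument lies in $[(1-e^k)^+,1)=[0,1)$, tracking the sign reversal when passing to reciprocals, confirming $L(k,c)>0$ so that $c\,L(k,c)>0$, and handling the boundary case $c\,L(k,c)=1$. The substantive work has already been carried out in Propositions \ref{th:inv}, \ref{th:hatbound}, and \ref{th:cto1}.
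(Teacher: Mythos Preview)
Your argument is correct and follows essentially the same route as the paper: combine the identity $\IBS(k,c)=2k/\IBS(k,\hat C(k,c))$ from Proposition~\ref{th:inv} with the sandwich on $\hat C$ from Proposition~\ref{th:hatbound} and the monotonicity of $\IBS(k,\cdot)$, then feed in the bounds of Proposition~\ref{th:cto1} for the two ``in particular'' estimates. Your write-up is in fact more careful than the paper's terse proof (which omits an explicit citation of Proposition~\ref{th:hatbound}) in tracking domains, signs, and the boundary case $c\,L(k,c)=1$.
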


\begin{proof} The first claim follows from the fact that $\IBS(k, \cdot)$ is 
increasing and Proposition \ref{th:inv}.  The second set of claims
follow from the bounds in Proposition \ref{th:cto1}.
\end{proof}

\begin{remark}
The inequality
$$
\IBS(k, c) \IBS(k,1-c) \ge 2k
$$
which holds for all $k > 0$ and $0 < c < 1$, has an appealing symmetry!
\end{remark}

\begin{proof}[Proof of Theorem \ref{th:RR}]
First fix $k > 0$.  
Using the second lower bound from Proposition \ref{th:short1}, together with Lemma \ref{th:asym}, we have
 that
$$
\IBS(k,c) \ge \frac{k}{-2 \log c} +  O\left( \frac{ \log(-\log c)}{(-\log c)^{3/2}} \right).
$$
Similarly, since  Lemma \ref{th:asym} implies that the quantity $L(k,c)$
from Proposition \ref{th:hatbound} is of asymptotic order
$$
L(k,c)  = O( \log c )
$$
as $c \downarrow 0$ thanks to Proposition \ref{th:asym}, the upper bound follows.  

The case $k< 0$ follows from the put-call symmetry of 
Proposition \ref{th:sym}.
\end{proof}

Figure \ref{fi:short} illustrates the behaviour of $\IBS(k,c)$ as $c \downarrow 0$, compared
with the uniform upper and lower bounds of Proposition \ref{th:short1} and the asymptotic
formula in Theorem \ref{th:RR}.    We fixed the log-moneyness $k=0.2$ and plotted four
functions: 
\begin{enumerate}
\item $Y_{\mathrm{upper}}(c) =  \tfrac{k}{- \Phi^{-1} \left( \frac{c \ L(k,c) }{2} \right)}$ is the upper bound from Proposition \ref{th:short1};
\item   $Y_*(c) = \IBS(k,c)$ is the true function of our interest; 
\item $Y_{\mathrm{lower}}(c) =   \tfrac{k}{- \Phi^{-1}\left( \frac{c}{1+e^k} \right)}$ is the lower bound from Proposition \ref{th:short1}; 
\item  $Y_{\mathrm{asym}}(c) =  \tfrac{ k}{\sqrt{ - 2 \log c}}$ is the asymptotic shape from Theorem \ref{th:RR}.
\end{enumerate}
	Note again that $Y_{\mathrm{upper}} \ge Y_* \ge Y_{\mathrm{lower}}$ as predicted.   
	Finally,  note that $Y_{\mathrm{asym}} \le Y_{\mathrm{lower}}$ for this range of $c$.
	  
\begin{figure}
\caption{Bounds and asymptotics of $\IBS(k, \cdot)$ as $c \downarrow 0$. }
	  \label{fi:short}
		\includegraphics[trim = 0cm 0 0cm 0, clip, scale = 0.38]{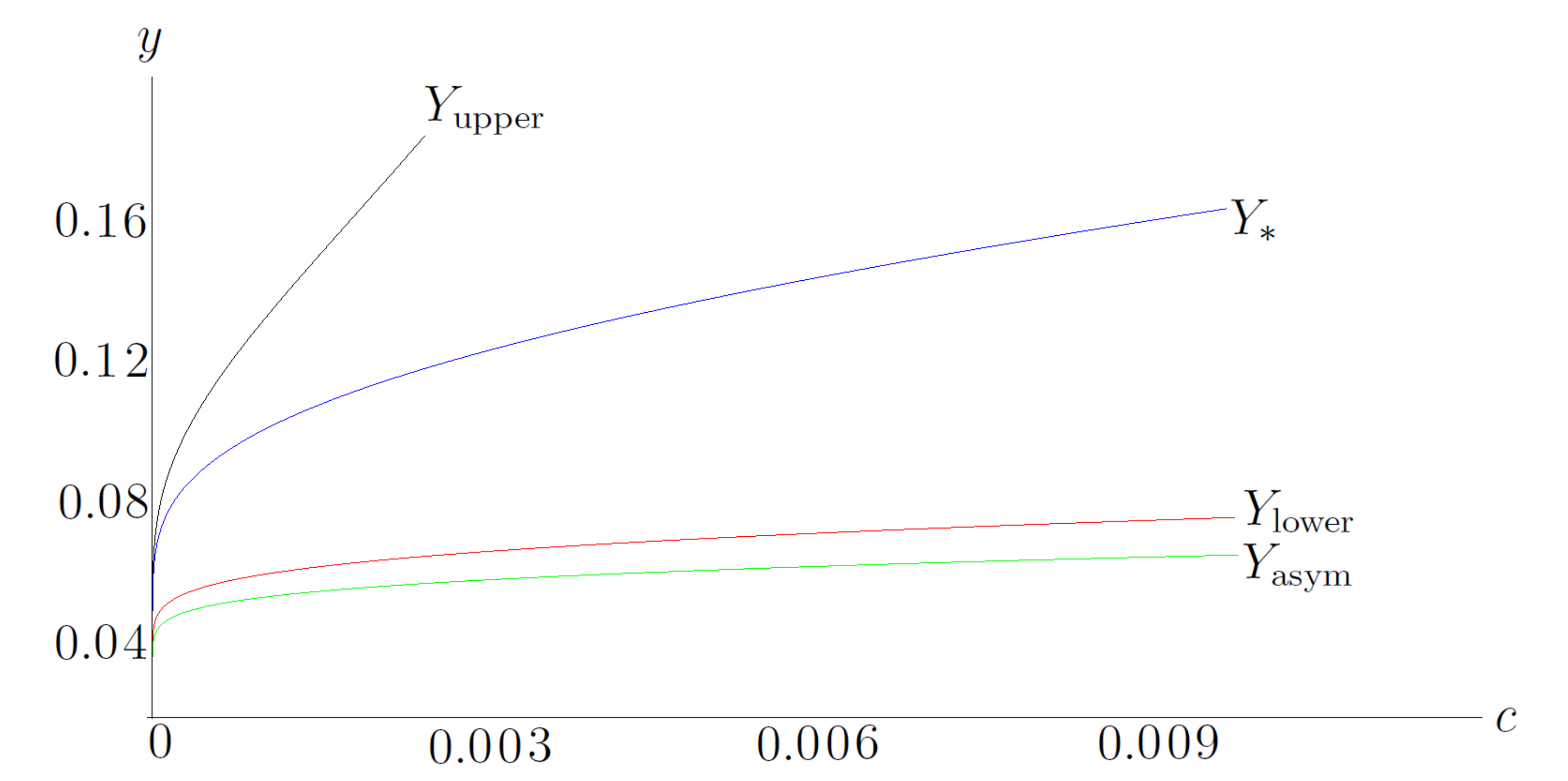}		
	\end{figure} 

The next example is due to Gulisashvili \cite{gulisashvili}.  This result is useful in studying
the wings of the implied volatility surface for extreme strikes but fixed maturity date.
\begin{theorem}[Gulisashvili] \label{th:gul} If $c(k) \downarrow 0$ as $k \uparrow + \infty$ then
$$
 \IBS(k,c(k) ) =  \sqrt{ -2 \log( e^{-k} c(k) )  } - \sqrt{- 2\log c(k) } + O \left( \frac{ \log( - \log c(k) )}{\sqrt{ - \log c(k)}} \right).
$$ 
If $e^{-k}p(k) \downarrow 0$  as $k \downarrow - \infty$ then
$$
\IBS(k,c(k) ) =  \sqrt{ - 2 \log p(k)   }- \sqrt{ - 2 \log(e^{-k} p(k)) }  + 
O \left( \frac{ \log( - \log (e^{-k} p(k)) )}{\sqrt{ - \log(e^{-k} p(k))}} \right).
$$
where   $c(k) = 1 - e^k + p(k)$.
\end{theorem}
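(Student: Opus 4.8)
The plan is to follow the template used for Theorems \ref{th:jap} and \ref{th:RR}: produce two inequalities
$$
 \Phi^{-1}(c) + \sqrt{[\Phi^{-1}(c)]^2 + 2k} \;\le\; \IBS(k,c) \;\le\; \sqrt{-2\log(e^{-k}c)} + \Phi^{-1}\left( c + e^k\Phi(-\sqrt{-2\log(e^{-k}c)}) \right)
$$
valid \emph{uniformly} over $k > 0$ and $0 < c < 1$ (with the usual convention for $\Phi^{-1}$), and then substitute the expansions of Lemma \ref{th:asym}. The case $k \downarrow -\infty$ is handled by put-call symmetry: writing $p = c + e^k - 1$, Proposition \ref{th:sym} gives $\IBS(k,c) = \IBS(-k, e^{-k}p)$, and applying the $k>0$ result to log-moneyness $-k \uparrow +\infty$ and normalised price $e^{-k}p \downarrow 0$ produces exactly the second displayed formula in the theorem. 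So it suffices to treat $k > 0$ with $c = c(k) \downarrow 0$.

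For the upper bound I would plug the control $d_2 = -\sqrt{-2\log(e^{-k}c)} = -\sqrt{2k - 2\log c}$ into the second representation of Theorem \ref{th:main}; this is well-defined for every $k>0$, $0<c<1$, and it is essentially the optimal choice, since the true minimiser is $d_2^* = -k/y - y/2$ and one computes $[d_2^*]^2 = [d_1^*]^2 + 2k$ with $d_1^* > \Phi^{-1}(c)\approx -\sqrt{-2\log c}$. The Gaussian tail estimate $\Phi(-x) \le \phi(x)/x$ together with the exact identity $\phi(\sqrt{2k-2\log c}) = \tfrac{1}{\sqrt{2\pi}}e^{-k}c$ gives $e^k\Phi(d_2) \le c/(\sqrt{2\pi}\,|d_2|)$, so $\varepsilon := c + e^k\Phi(d_2)$ satisfies $c \le \varepsilon \le c(1 + O(1/\sqrt{-\log c}))$. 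Then $-\log\varepsilon = -\log c + O(1/\sqrt{-\log c})$, and Lemma \ref{th:asym} yields $\Phi^{-1}(\varepsilon) = -\sqrt{-2\log c} + O(\log(-\log c)/\sqrt{-\log c})$; adding $-d_2 = \sqrt{-2\log(e^{-k}c)}$ gives the upper half of the claimed expansion.

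For the lower bound I would argue directly rather than through the maximisation formula of Theorem \ref{th:lowerbound} (whose controls are hard to choose well). Setting $y = \IBS(k,c)$ and $a = \Phi^{-1}(c)$, the identity $\BS(k,y) = c$ forces $\Phi(-k/y+y/2) \ge c$, i.e.\ $y/2 - k/y \ge a$, i.e.\ $y^2 - 2ay - 2k \ge 0$; since $y > 0$ and the larger root of this quadratic is $a + \sqrt{a^2 + 2k}$, we obtain $\IBS(k,c) \ge \Phi^{-1}(c) + \sqrt{[\Phi^{-1}(c)]^2 + 2k}$ for all $k>0$, $0<c<1$. Now Lemma \ref{th:asym} supplies both $\Phi^{-1}(c) = -\sqrt{-2\log c} + O(\log(-\log c)/\sqrt{-\log c})$ and $[\Phi^{-1}(c)]^2 = -2\log c + O(\log(-\log c))$, whence $[\Phi^{-1}(c)]^2 + 2k = -2\log(e^{-k}c) + O(\log(-\log c))$, and taking a square root matches the upper bound.

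The one thing that needs genuine care is the error bookkeeping, and in particular uniformity in the two diverging quantities $k$ and $-\log c$. The crucial observation is that in each linearisation of a square root the extra $2k$ only \emph{enlarges} the denominator: for instance $\sqrt{[\Phi^{-1}(c)]^2 + 2k} - \sqrt{-2\log(e^{-k}c)}$ equals $O(\log(-\log c))$ divided by a quantity that is at least $\sqrt{-2\log c}$, hence is $O(\log(-\log c)/\sqrt{-\log c})$ no matter how large $k$ is; the same mechanism, via the bound on $\varepsilon$, controls the upper estimate. Once this is checked, the matching of the two sides is routine, and the two displayed inequalities stand on their own as uniform bounds, in the spirit of Propositions \ref{th:cto1} and \ref{th:short1}.
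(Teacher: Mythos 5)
Your proof is correct and follows essentially the same route as the paper: the lower bound and its derivation via $\Phi(-k/y+y/2)\ge c$ coincide exactly with Proposition \ref{th:gul-bounds}, and the whole argument is the paper's template of uniform bounds extracted from Theorem \ref{th:main}, combined with Lemma \ref{th:asym} and put-call symmetry for the $k\downarrow-\infty$ case. The only variation is your choice of control for the upper bound --- you take $d_2=-\sqrt{2k-2\log c}$ and bound $e^k\Phi(d_2)$ by a Mills-ratio estimate, whereas the paper takes $d_2=\Phi^{-1}(e^{-k}c)$ to obtain the closed form $\Phi^{-1}(2c)-\Phi^{-1}(e^{-k}c)$; both controls yield the stated expansion with the same error term, and your uniformity observation (that the extra $2k$ only enlarges denominators) is exactly the point the paper relies on implicitly via $e^{-k}c\le c$.
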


As before, the proof will rely on appropriate uniform bounds:

\begin{proposition} \label{th:gul-bounds} Fix $k \in \RR$ and $(1-e^k)^+ \le c < 1$.
If $k \ge 0$ we have
$$
\Phi^{-1}(c) + \sqrt{ [ \Phi^{-1}(c)]^2 + 2k } \le  \IBS(k,c )  \le \Phi^{-1}( 2 c) - \Phi^{-1}( e^{-k} c)
$$
and for $k < 0$ we have
$$
\Phi^{-1}(e^{-k}p) + \sqrt{ [ \Phi^{-1}(e^{-k}p)]^2  -  2k } \le   \IBS(k,c )  \le 
\Phi^{-1}( 2 e^{-k} p) - \Phi^{-1}( p)
$$
where $p = c + e^k - 1$.
\end{proposition}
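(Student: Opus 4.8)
The plan is to prove the $k\ge 0$ case directly, getting the upper bound by inserting a convenient control into Theorem \ref{th:main} and the lower bound from the closed form of $\BS$, and then to deduce the $k<0$ case from put-call symmetry. Throughout I would write $w=\Phi^{-1}(c)$ and $y=\IBS(k,c)$, and keep in mind the conventions on $\Phi^{-1}$ from the remark following Theorem \ref{th:main} (so that any displayed bound whose right-hand side is $+\infty$ is vacuously true).

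For the \textbf{upper bound} when $k\ge 0$, I would feed the control $d_2=\Phi^{-1}(e^{-k}c)$ into the second minimisation formula $\IBS(k,c)=\inf_{d_2}[\Phi^{-1}(c+e^k\Phi(d_2))-d_2]$ of Theorem \ref{th:main}. This $d_2$ is a genuine real number because $0\le e^{-k}c<1$, and it is chosen precisely so that $c+e^k\Phi(d_2)=c+e^k e^{-k}c=2c$; hence the objective at $d_2$ equals $\Phi^{-1}(2c)-\Phi^{-1}(e^{-k}c)$, and taking the infimum gives $\IBS(k,c)\le \Phi^{-1}(2c)-\Phi^{-1}(e^{-k}c)$. (If $2c\ge 1$ the right side is $+\infty$ and there is nothing to prove.)

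For the \textbf{lower bound} when $k\ge 0$, I would start from the explicit formula
$$
c=\BS(k,y)=\Phi\!\left(-\tfrac{k}{y}+\tfrac{y}{2}\right)-e^k\Phi\!\left(-\tfrac{k}{y}-\tfrac{y}{2}\right)\le \Phi\!\left(-\tfrac{k}{y}+\tfrac{y}{2}\right),
$$
so that $w=\Phi^{-1}(c)\le -k/y+y/2$. Multiplying through by $2y>0$ turns this into the quadratic inequality $y^2-2wy-2k\ge 0$. Since $k\ge 0$ the smaller root $w-\sqrt{w^2+2k}$ is $\le 0$, while $y>0$; therefore the inequality forces $y\ge w+\sqrt{w^2+2k}$, which is exactly the claimed lower bound. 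The only point needing a moment's care here is this root selection, which uses both $y>0$ and $k\ge 0$; I do not expect any genuine obstacle.

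Finally, for the \textbf{case $k<0$}, I would apply the two inequalities just proved with $k$ replaced by $-k>0$ and $c$ replaced by $c':=e^{-k}c+1-e^{-k}$. A one-line computation from $p=c+e^k-1$ shows $c'=e^{-k}p$, and Proposition \ref{th:sym} gives $\IBS(k,c)=\IBS(-k,c')=\IBS(-k,e^{-k}p)$. Substituting $(-k,e^{-k}p)$ into the $k\ge 0$ bounds (and using $e^{-(-k)}e^{-k}p=p$) reproduces verbatim the two displayed bounds for $k<0$. In summary, the argument is entirely elementary; the subtlest issues are the $\pm\infty$ conventions for $\Phi^{-1}$ and choosing the correct root of the quadratic in the lower-bound step.
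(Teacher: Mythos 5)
Your proposal is correct and follows essentially the same route as the paper: the same control $d_2=\Phi^{-1}(e^{-k}c)$ in Theorem \ref{th:main} for the upper bound, the same inequality $c\le\Phi(-k/y+y/2)$ inverted via the map $x\mapsto x+\sqrt{x^2+2k}$ (which you obtain by solving the quadratic rather than quoting the inverse function, a cosmetic difference) for the lower bound, and put-call symmetry for $k<0$. No gaps.
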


\begin{proof}  Consider the case $k \ge 0$.  
For the upper bound,   let $d_2 = \Phi^{-1}( e^{-k} c )$  
 in Theorem \ref{th:main}.  
 
For the lower bound, let $y = \IBS(k,c)$.
 Observe  that
\begin{align*}
 \Phi\left(-\frac{k}{y}+ \frac{y}{2} \right)   &= c + e^{k} \Phi\left(-\frac{k}{y}- \frac{y}{2} \right)  \\
& \ge c.
\end{align*}
The conclusion follows from noting that the strictly increasing map
$$
x \mapsto x + \sqrt{x^2 + 2k }
$$
from $\RR$ to $(0,\infty)$ is the inverse of the map
$$
y \mapsto  - \frac{k}{y} + \frac{y}{2} 
$$
from $(0,\infty)$ to $\RR$.  

The case where $k<0$ is handled by put-call symmetry as always.
\end{proof}
 
\begin{remark}  The idea behind the lower bound is the simple inequality
$$
(X-K)^+ \le X \1_{[K, \infty)}(X)
$$
 which holds for all $X, K \ge 0$.
 \end{remark}

\begin{proof}[Proof of Theorem \ref{th:gul}]  For $k \ge 0$, we apply Proposition \ref{th:gul-bounds} and Lemma \ref{th:asym} to get
\begin{align*}
  \IBS(k, c(k) ) &\le    \Phi^{-1}( 2 c(k)) - \Phi^{-1}( e^{-k} c(k)) \\
& = - \sqrt{ -2 \log c(k)}  + \sqrt{- 2\log e^{-k} c(k) } + O \left( \frac{ \log( - \log c(k) )}{\sqrt{ - \log c(k)}} \right)
\end{align*}
where we have used $\sqrt{ -2 \log (2c)} =  \sqrt{ -2 \log c} + O( \frac{1}{\sqrt{-\log c}} )$
as $c \downarrow 0$ to
control the error from the first term, and the
bound $e^{-k} c(k) \le c(k)$ to control the error from the second term.   
Similarly, for the upper bound Proposition \ref{th:gul-bounds} and Lemma \ref{th:asym} yield
\begin{align*}
  \IBS(k, c(k) )  &\ge \Phi^{-1}(c) + \sqrt{ [ \Phi^{-1}(c)]^2 + 2k } \\
&  = - \sqrt{ -2 \log c(k)}  + \sqrt{- 2\log e^{-k} c(k) } + O \left( \frac{ \log( - \log c(k) )}{\sqrt{ - \log c(k)}} \right).
\end{align*}

The $k \downarrow - \infty$ case is similar.
\end{proof}

Figure \ref{fi:G} illustrates the behaviour of $\IBS(k,c(k))$ when $c(k) \downarrow 0$ as $k \uparrow \infty$, compared
with the uniform upper and lower bounds of Proposition \ref{th:gul-bounds} and the asymptotic
formula in Theorem \ref{th:gul}.   We have chosen the function $c(\cdot)$ 
according to the variance gamma model.  That is, we fix a time horizon $T > 0$ and let
$$
c(k ) = \EE[ (X-e^k)^+ ]
$$
where
$$
X = e^{\sigma W(G_T) + \Theta G_T + mT}
$$
and $\sigma$ and $\Theta$ are real constants, and the process $W$ is a Brownian motion
subordinated to the gamma process $G$, which is an independent L\'evy process with jump measure 
$ \mu(dx) = \frac{1}{\nu x } e^{- x/\nu} dx$ for some constant $\nu > 0$.
The constant $m$ is chosen  so that
$$
\EE[ X ] = 1.
$$
  It is well known that $G_T$ has the gamma distribution 
with mean $T$ and variance $\nu T$.  
By a routine calculation involving the moment generating functions
of the normal and gamma distribution, we find the moment generating function $M $ of $\log X $  
to be
$$
M(r) =  e^{rmT}(1 - \nu( \Theta r + \sigma^2 r^2/2) )^{-T/\nu}.
$$
Therefore, we must set
$$
m = \frac{1}{\nu} \log (1 - \nu (\Theta + \sigma^2/2) ).
$$
Note that we must assume the parameters are such that
$$
 \Theta + \sigma^2/2 < 1/ \nu
$$
to ensure that $m$ is real.  Recall that the random variable $X$ 
has the interpretation of the ratio $X = S_T/F_{0,T}$ of the time-$T$ price $S_T$ of some asset
to its initial time-$T$ forward price.  The expected value is computed under
a fixed time-$T$ forward measure.  Hence 
$c(k)$ models initial normalised price of a call option with log-moneyness
$k$ and maturity $T$.  We use  the parameters
$\sigma = 0.1213$, $\nu = 0.1686$  and $\Theta = -0.1436$ as suggested
by the calibration of Madan, Carr and Chang \cite{mcc} and set $T=5$.
 
As before, we plotted four
functions:
\begin{enumerate}
\item
 $Y_{\mathrm{upper}}(k) =  \Phi^{-1}( 2 c(k)) - \Phi^{-1}( e^{-k} c(k)) $ is the upper bound from Proposition \ref{th:gul-bounds};
 \item
 $Y_*(k) = \IBS(k,c(k))$ is the true function of our interest; 
\item 
$Y_{\mathrm{lower}}(k) =  \Phi^{-1}(c(k)) + \sqrt{ [ \Phi^{-1}(c(k))]^2 + 2k }   $ is the lower bound from Proposition \ref{th:gul-bounds};
\item 
   $Y_{\mathrm{asym}}(k) =  \sqrt{ -2 \log( e^{-k} c(k) )  } - \sqrt{- 2\log c(k) }$ is the asymptotic shape from Theorem \ref{th:gul}.
	\end{enumerate}
As always, note that $Y_{\mathrm{upper}} \ge Y_* \ge Y_{\mathrm{lower}}$ as predicted.   
	Finally,  note that $Y_{\mathrm{asym}} \le Y_{\mathrm{lower}}$ for this example.  
	It is worth remarking that for the points on the extreme right side of the graph of $Y_*$
	the moneyness $K/F_{0,T} \approx 10$ and normalised call price $C_{0,T,K}/( F_{0,T} B_{0,T}) \approx 10^{-15}$
	are outside the range of typical liquid market prices.

\begin{figure}
\caption{Bounds and asymptotics of $\IBS(\cdot, c(\cdot))$ as $c(k) \downarrow 0$ as $k \uparrow \infty$. }
	  \label{fi:G}
		\includegraphics[trim = 0cm 0 0cm 0, clip, scale = 0.38]{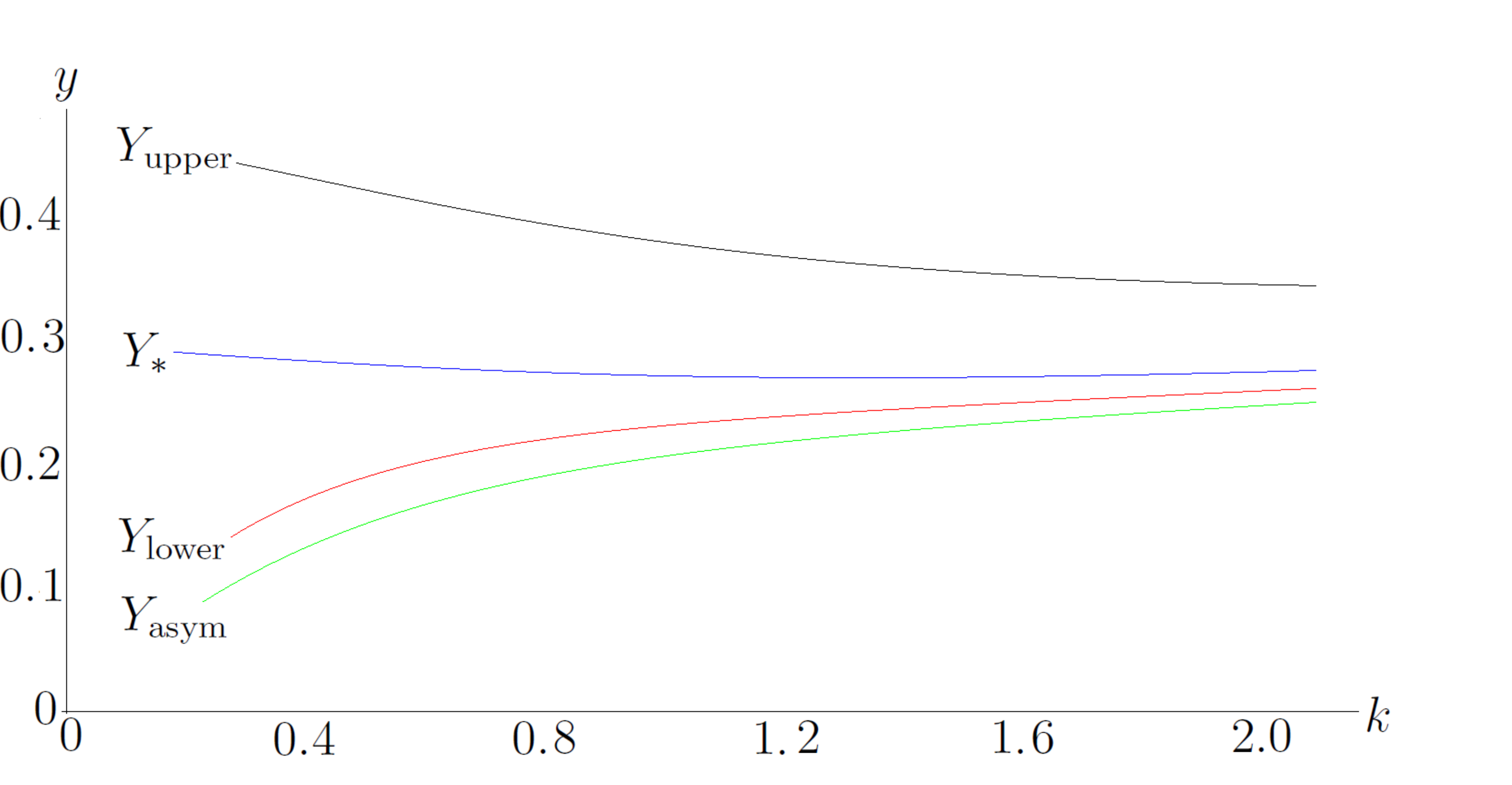}		
	\end{figure}

The recent paper \cite{DHJ} of  De Marco,   Hillairet \& Jacquier  studies a 
similar asymptotic regime as the $k \downarrow -\infty$ case of Theorem \ref{th:gul}, except 
now the assumption is that $e^{-k}p(k) \to u > 0$.   See also
the paper of  Gulisashvili \cite{gulisashvili3} for further refinements.   The motivation
is to study the left-wing behaviour of the implied volatility smile
in the case where the price of the underlying stock may hit zero. The first two terms in the 
following expansion have been known for a few years; see for instance \cite{Stockholm}.   

Even more recently, Jacquier \& Keller-Ressel \cite{JacquierKeller-Ressel}
 have interpreted the corresponding (via Proposition \ref{th:sym})
right-wing formula in terms of a market model with a price bubble.  We will comment on this
interpretation below.

\begin{theorem}[De Marco,   Hillairet \& Jacquier] \label{th:atom}
Suppose $e^{-k}p(k) \to u$ as $k \downarrow - \infty$ where $0 < u < 1$.  Then letting
$c(k) = p(k) + 1 - e^k$ we have
\begin{align*} 
\IBS(k,c(k) ) =  \sqrt{-2k} +  \Phi^{-1}(u ) + O\left ( \frac{1}{\sqrt{-k}} + \varepsilon(k) \right)
\end{align*}
 as  $k \downarrow - \infty$, where
$$
\varepsilon(k) = e^{-k}p(k) - u.
$$

\emph{(Jacquier \& Keller-Ressel)}.
Furthermore, suppose $c(k) \to u$ as $k \uparrow + \infty$ where $0 < u < 1$. Then
\begin{align*} 
\IBS(k,c(k) ) =  \sqrt{2k}  + \Phi^{-1}(u ) + O\left ( \frac{1}{\sqrt{ k}} + \varepsilon(k) \right)
\end{align*}
where 
$$
\varepsilon(k) = c(k) -u 
$$
\end{theorem}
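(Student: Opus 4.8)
The plan is to reduce the two halves of the theorem to a single asymptotic statement via put-call symmetry, and then to sandwich $\IBS(k,c(k))$ between uniform bounds that are already available. First I would observe that the two assertions are equivalent. With $c(k)=p(k)+1-e^k$ a direct computation gives $e^{-k}c(k)+1-e^{-k}=e^{-k}p(k)$, so Proposition~\ref{th:sym} yields $\IBS(k,c(k))=\IBS(-k,e^{-k}p(k))$. Setting $\ell=-k\uparrow+\infty$, the hypothesis $e^{-k}p(k)\to u$ is precisely the hypothesis of the second part at log-moneyness $\ell$ with call price tending to $u$; moreover $\sqrt{-2k}=\sqrt{2\ell}$, $1/\sqrt{-k}=1/\sqrt{\ell}$, and $e^{-k}p(k)-u$ is the $\varepsilon(\ell)$ of the second part. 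Hence it suffices to prove: if $c(k)\to u\in(0,1)$ as $k\uparrow+\infty$, then with $\varepsilon(k)=c(k)-u$ one has $\IBS(k,c(k))=\sqrt{2k}+\Phi^{-1}(u)+O(1/\sqrt{k}+|\varepsilon(k)|)$.

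For the lower bound I would invoke the $k\ge 0$ lower bound of Proposition~\ref{th:gul-bounds}, namely $\IBS(k,c)\ge\Phi^{-1}(c)+\sqrt{[\Phi^{-1}(c)]^2+2k}$. Since $u$ is bounded away from $0$ and $1$, the function $\Phi^{-1}$ is Lipschitz on a fixed neighbourhood of $u$, so $\Phi^{-1}(c(k))=\Phi^{-1}(u)+O(|\varepsilon(k)|)$ and $[\Phi^{-1}(c(k))]^2=O(1)$; therefore $\sqrt{[\Phi^{-1}(c(k))]^2+2k}=\sqrt{2k}\,(1+O(1/k))=\sqrt{2k}+O(1/\sqrt{k})$, and summing the two pieces gives the lower bound with the stated error.

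For the upper bound I would use the representation $\IBS(k,c)=\inf_{d_2}[\Phi^{-1}(c+e^k\Phi(d_2))-d_2]$ from Theorem~\ref{th:main} and feed in the feasible control $d_2=-\sqrt{2k}$, which coincides with the true minimiser $d_2^*=-k/y-y/2$ when $y=\sqrt{2k}$ and therefore matches it to leading order since $y=\IBS(k,c(k))=\sqrt{2k}+O(1)$. The Mills-ratio inequality $\Phi(-x)\le\phi(x)/x$ gives $0\le e^k\Phi(-\sqrt{2k})\le 1/(2\sqrt{\pi k})$, so $c(k)+e^k\Phi(-\sqrt{2k})=u+O(|\varepsilon(k)|+1/\sqrt{k})$ stays in a compact subinterval of $(0,1)$ for all large $k$; applying once more the local Lipschitz bound for $\Phi^{-1}$ yields $\IBS(k,c(k))\le\Phi^{-1}(u)+O(|\varepsilon(k)|+1/\sqrt{k})+\sqrt{2k}$, which is the required upper bound. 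Combining the two bounds proves the second part, and the first part then follows from the symmetry reduction.

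There is no genuine obstacle here: the argument is essentially routine once one notices that the clean uniform bounds of Proposition~\ref{th:gul-bounds} and Theorem~\ref{th:main} (with the explicit control $d_2=-\sqrt{2k}$) already suffice. The only point requiring care is checking that all the $O(\cdot)$ estimates are uniform in $k$ for $k$ large, which reduces to the fact that $u\in(0,1)$ is fixed, so that $\Phi^{-1}$ and its derivative are controlled on a fixed neighbourhood of $u$ and the convention $\Phi^{-1}=\pm\infty$ outside $[0,1]$ never intervenes.
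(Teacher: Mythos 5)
Your proposal is correct and follows essentially the same route as the paper: the lower bound from Proposition~\ref{th:gul-bounds}, the upper bound from Theorem~\ref{th:main} with the control $d_2=-\sqrt{2k}$ (which is exactly Proposition~\ref{th:lee}), the Mills-ratio estimate $e^k\Phi(-\sqrt{2k})\le(4\pi k)^{-1/2}$, and the local Lipschitz continuity of $\Phi^{-1}$ near $u$. The only cosmetic difference is that you prove the $k\uparrow+\infty$ case directly and deduce the other by put-call symmetry, whereas the paper does the reverse.
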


Our proof of Theorem \ref{th:atom} reuses the uniform lower bound from Proposition \ref{th:gul-bounds}.
However, another upper bound is needed in this situation:

\begin{proposition}\label{th:lee} Fix $k \in \RR$ and $(1-e^k)^+ \le c < 1$.  If $k \ge 0$, we have 
$$
 \IBS(k,c)  \le \Phi^{-1}( c + e^k \Phi(- \sqrt{2k}) ) + \sqrt{2k} 
$$
and if $k < 0$ we have
$$
 \IBS(k,c)  \le  \Phi^{-1}(  e^{-k} p  + e^{-k} \Phi(- \sqrt{-2k}) ) + \sqrt{-2k} 
$$
where $p = c + e^k - 1$. 
\end{proposition}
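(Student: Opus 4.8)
The plan is to read off both inequalities directly from Theorem \ref{th:main} together with put-call symmetry, with essentially no computation. Recall that Theorem \ref{th:main} exhibits $\IBS(k,c)$ as the infimum over $d_2 \in \RR$ of $\Phi^{-1}\big(c + e^k\Phi(d_2)\big) - d_2$, so that \emph{every} choice of $d_2$ yields an upper bound. For $k \ge 0$ I would simply take $d_2 = -\sqrt{2k}$, which gives
$$
\IBS(k,c) \le \Phi^{-1}\big(c + e^k\Phi(-\sqrt{2k})\big) + \sqrt{2k},
$$
precisely the claimed inequality. The choice is the natural one: by Theorem \ref{th:main} the optimal control is $d_2^* = -k/y - y/2$ with $y = \IBS(k,c)$, and $-\sqrt{2k}$ is the value of $-k/y - y/2$ at $y = \sqrt{2k}$, the point at which the restriction of $\BS(k,\cdot)$ changes from convex to concave (equivalently, where the Black--Scholes vega $\phi(-k/y+y/2)$ is maximal). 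So the bound is tight exactly when $\IBS(k,c) = \sqrt{2k}$, i.e. when $c = \BS(k,\sqrt{2k}) = 1/2 - e^k\Phi(-\sqrt{2k})$; this is the regime relevant to the proof of Theorem \ref{th:atom}.

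For $k < 0$ I would invoke put-call symmetry. By Proposition \ref{th:sym} we have $\IBS(k,c) = \IBS(-k,\, e^{-k}c + 1 - e^{-k})$, and since $e^{-k}c + 1 - e^{-k} = e^{-k}(c + e^k - 1) = e^{-k}p$, this reads $\IBS(k,c) = \IBS(-k,\, e^{-k}p)$ with $-k > 0$. Applying the bound just established, with $-k$ in place of $k$ and $e^{-k}p$ in place of $c$, yields
$$
\IBS(k,c) \le \Phi^{-1}\big(e^{-k}p + e^{-k}\Phi(-\sqrt{-2k})\big) + \sqrt{-2k},
$$
which is the second assertion.

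There is no genuine obstacle here. The only point to keep in mind is that each inequality is vacuously true whenever the argument of $\Phi^{-1}$ is $\ge 1$, where our convention sets $\Phi^{-1} = +\infty$, and is meaningful otherwise; in particular no case analysis on the feasibility range of $c$ is needed beyond what is already assumed. All the content lies in Theorem \ref{th:main}, and this proposition merely records the two bounds obtained from the single convenient control $d_2 = -\sqrt{2k}$.
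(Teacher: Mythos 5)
Your proposal is correct and matches the paper's proof: the paper likewise takes $d_2=-\sqrt{2k}$ in Theorem \ref{th:main} for $k\ge 0$, and for $k<0$ it takes $d_1=\sqrt{-2k}$ in the first infimum, which by the identity $H_1(d;k,c)=H_2(-d;-k,e^{-k}c+1-e^{-k})$ is exactly your put-call symmetry reduction. Nothing further is needed.
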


\begin{proof}  In the statement of 
Theorem \ref{th:main}, let $d_2 =  -\sqrt{2k}$ if $k \ge 0$, or let  $d_1=\sqrt{-2 k}$ if $k < 0$. 
\end{proof}

\begin{proof} [Proof of Theorem \ref{th:atom}]  It is sufficient to prove only the $k<0$ case.
Recall the standard bound on the normal Mills ratio  
$$
e^{x} \Phi(-\sqrt{2 x} ) \le \frac{1}{ \sqrt{4\pi x} } \to 0  \mbox{ as } x \uparrow \infty.
$$
Hence by  Proposition \ref{th:lee} we have
\begin{align*}
 \IBS(k,c)  - \sqrt{-2k} & \le   \Phi^{-1}( u + \varepsilon(k)   + (-4\pi k)^{-1/2} ) \\
& = \Phi^{-1}(u) +  O(\varepsilon(k)   + (-k)^{-1/2} ).
\end{align*}
Similarly  by Proposition \ref{th:gul} we have
\begin{align*}
 \IBS(k,c(k) ) - \sqrt{-2k}  &\ge \Phi^{-1}(u + \varepsilon(k)) + \sqrt{[\Phi^{-1}(u + \varepsilon(k))]^2 - 2k} - \sqrt{-2k} \\
&= \Phi^{-1}(u) +  O(\varepsilon(k)   + (-k)^{-1/2} )
\end{align*}
 completing the proof.
\end{proof}

Figure \ref{fi:left} illustrates the behaviour of $\IBS(k,c(k))$ when $e^{-k}p(k) \to u > 0$ as $k \downarrow -\infty$,
where $p(k)-c(k) = e^k - 1$,  compared
with the uniform upper  of Proposition \ref{th:lee}, lower bounds of Proposition \ref{th:gul-bounds} and the asymptotic
formula in Theorem \ref{th:atom}.   We have chosen the function $c(\cdot)$ 
according to the Black--Scholes model with a jump to default.  That is, we fix
a horizon $T> 0$ and let
$$
c(k ) = \EE[ (X-e^k)^+ ]
$$
where
$$
X = \mathbf{1}_{\{T < \tau\}}e^{\sigma W_T + (\lambda- \sigma^2/2)T}
$$
and $\sigma$ and $\lambda$ are positive constants, the process $W$ is a Brownian motion
and the random variable $\tau$ is independent of $W$ and exponentially distributed 
with rate $\lambda$, so that
$$
\EE[ X ] = 1.
$$
Note that
\begin{align*}
e^{-k} p(k) & = \EE[ (1- e^{-k} X )^+ ] \\
& \to \PP(X = 0 ) \\
& = \PP( \tau \le T ) \\
& = 1 - e^{- \lambda T}.
\end{align*}
On the other hand,  it is straightforward to calculate 
$$
c(k ) = \BS( k-\lambda T, \sigma \sqrt{T} ).
$$
We use  the parameters $\sigma = 0.60$ and $\lambda=0.05$
with time horizon $T = 4$.
 
As before, we plotted four
functions: 
\begin{enumerate}
\item
 $Y_{\mathrm{upper}}(k) =  \Phi^{-1}[e^{-k}p(k) + e^{-k}\Phi(-\sqrt{-2k})] + \sqrt{-2k}$
the upper bound from Proposition \ref{th:lee};
\item  
  $Y_*(k) = \IBS(k,c(k))$ is the true function of our interest;
	\item 
	$Y_{\mathrm{lower}}(k) =  \Phi^{-1}(e^{-k} p(k)) + \sqrt{   \Phi^{-1}(e^{-k}p(k))^2 -2k }   $ is the lower bound from Proposition \ref{th:gul-bounds}; 
	\item
	 $Y_{\mathrm{asym}}(c) =  \sqrt{-2k} + \Phi^{-1}(u)$   is the asymptotic shape from Theorem \ref{th:atom}.
	\end{enumerate}
As always, note that $Y_{\mathrm{upper}} \ge Y_* \ge Y_{\mathrm{lower}}$ as predicted,
that $ Y_{\mathrm{upper}}$ is a surprisingly good approximation for $Y_*$, and that
 $Y_{\mathrm{asym}} \le Y_{\mathrm{lower}}$ for this example.  For the left-hand points of the 
graph, the moneyness $K/F_{0,T} \approx 0.04$ is somewhat outside the range of typical liquid market prices.

\begin{figure}
\caption{Bounds and asymptotics of $\IBS( \cdot , c(\cdot))$ as $e^{-k}p(k) \downarrow u > 0$ as $k \downarrow -\infty$. }
	  \label{fi:left}
		\includegraphics[trim = 0cm 0 0cm 0, clip, scale = 0.38]{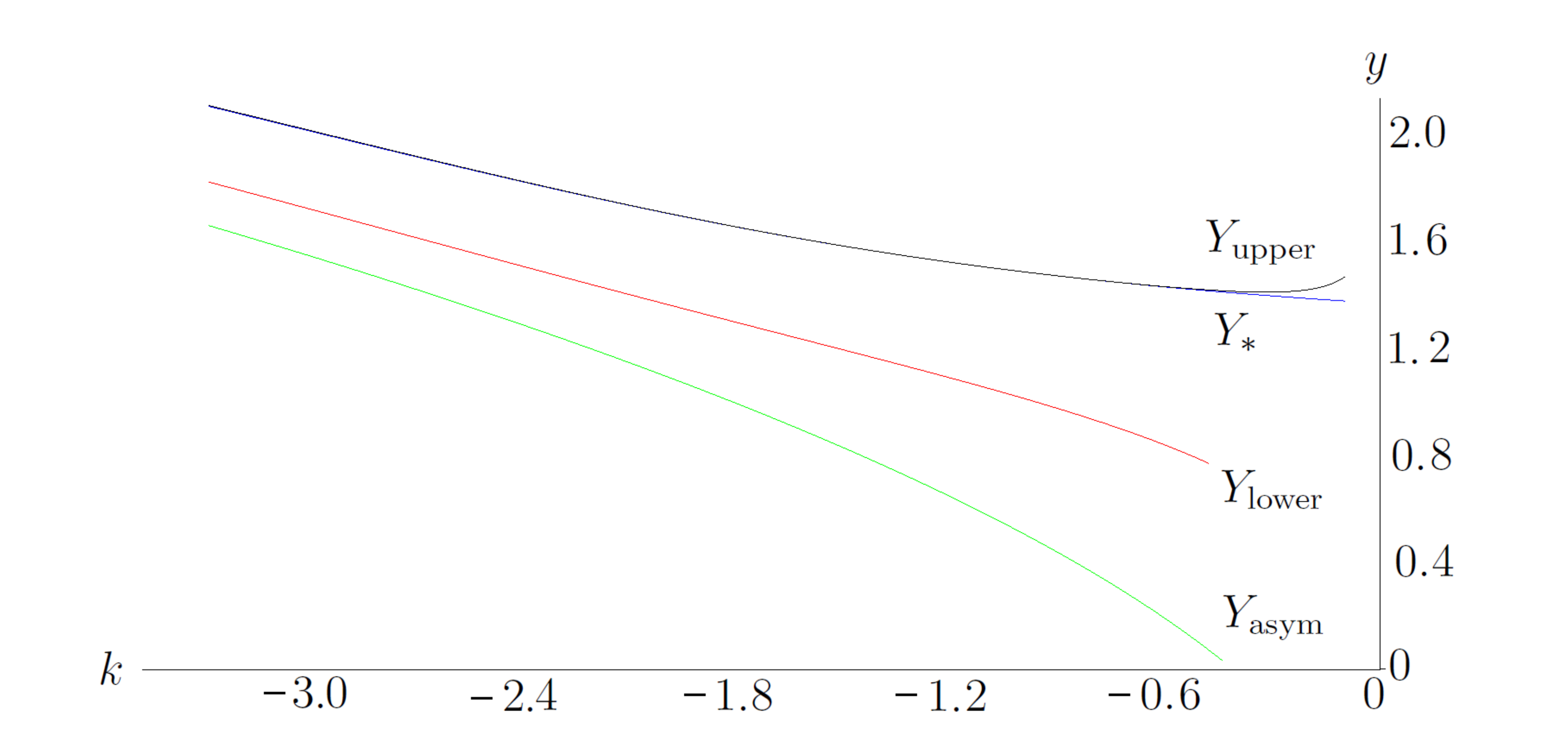}		
	\end{figure} 
	
	\begin{remark}\label{re:bubble}
To compute the implied volatility for a given model, one generally needs three ingredients:
the bond price $B_{0,T}$, the forward price $F_{0,T}$ and the call price $C_{0,T, K}$.
Consider the case where the interest rate is zero and the underlying stock pays no dividends.	
In particular, for this discussion $B_{0,T} = 1$.

In the discrete time case,   one  typically models the stock price $(S_t)_{t \ge 0}$ as  a martingale
so that there is no arbitrage.  The call price is then calculated as
$$
C_{0,T,K} = \EE[ (S_T-K)^+ ],
$$
with the justification that the above price is consistent with no-arbitrage in general, and
in the case of a complete market, the expected payout under the unique risk-neutral 
measure is the replication cost of the option and hence the unique no-arbitrage price. Similarly,
we have for the forward price the following formula
$$
F_{0,T} = \EE [ S_T ] = S_0.
$$

In the continuous time setting, things are more subtle because of the existence of doubling strategies.  If one assumes the NFLVR notion of no-arbitrage,
then by Delbaen \& Schachermayer's fundamental theorem of asset pricing \cite{DelSch}
the asset prices are sigma-martingales, but not necessarily 
true martingales. In particular, given a
dynamic model of the underlying process $(S_t)_{t \ge 0}$, this no-arbitrage
condition alone does not uniquely specify the call and forward prices, even in
a complete market.  	See, for instance, 
the paper of Ruf \cite{Ruf} for a discussion of this issue.
When the market is complete, a candidate call price is
 the minimal replication cost 
$$
C^{\mathrm{repl}} = \EE[ (S_T-K)^+ ].
$$
Another sensible way to price the call is to assume that the put price
is its minimal replication cost and the call is priced by put-call parity:
\begin{align*}
C^{\mathrm{parity}} &= S_0 - K + \EE[ (K -S_T)^+ ] \\
& = S_0 -\EE[ K \wedge S_T ]
\end{align*}
Similarly, the forward price can be either given by static replication
$$
F^{\mathrm{static}} = S_0
$$
or by dynamic replication
$$
F^{\mathrm{dyn}} = \EE[ S_T].
$$

Of course, if $(S_t)_{t \ge 0}$ is a true martingale the corresponding
candidate prices agree;
however, there has been recent interest in models where, for instance, the process
$(S_t)_{t \ge 0}$ is a non-negative strict  local martingale, and hence a strict
supermartingale.  (Such price processes are often described as bubbles;
see, for instance, the paper of Cox \& Hobson \cite{CoxHobson}.)

The result of Jacquier \& Keller-Ressel quoted here as the second
half of Theorem \ref{th:atom} corresponds to choosing
$
C_{0,T,K} = C^{\mathrm{parity}}
$
and
$
F_{0,T} = F^{\mathrm{static}}
$
so that the implied volatility is 
$$
\sigma^{\textrm{implied}} =  \frac{1}{\sqrt{T}}
Y( \log(K/S_0),   1 - \EE[ K \wedge S_T ]/S_0).
$$
We note here that this 
  convention for defining implied volatility was also adopted
in \cite{JAP09}.  On the other hand, note that
the convention $C_{0,T,K} = C^{\mathrm{repl}}$ and 
$F_{0,T} = F^{\mathrm{dyn}}$ is used in equation \eqref{eq:intro} of the introduction.
	\end{remark}

We conclude with some remarks on the bounds and asymptotic formulae 
in this section.  The numerical results suggest that 
for at least some situations, one of the upper or lower bounds 
is a better approximation to the implied total standard deviation than
the corresponding lowest order asymptotic formula.  One could argue that 
with more terms in the asymptotic series, better accuracy could be attained
with the asymptotics.
Although such a claim is indeed plausible, there are a few reasons why it is beside the point.

First, the numerical results presented here should only be considered a proof of concept, rather
than a head-to-head competition between state-of-the-art approximations.
Nevertheless,  it is worth noting both the given bounds and the asymptotic formulae are only approximations, and therefore
have an error term.  But  unlike the error terms of an asymptotic formula, the error term for our bounds have
a known sign.

Second, given one bound,
the theorems of Section \ref{se:main} give a systematic way of finding a better
bound.   Indeed, fix $(k,c)$ with $k > 0$, and let $y_* = \IBS(k,c)$.  Suppose it known that
$$
y_* < y_1
$$
where $y_1$ is some given approximation.  Define $F:(y_{\mathrm{min}} ,\infty) \to (0, \infty)$ by 
$$
F(y) = H_1( -k/y + y/2; k, c)
$$
where 
$$
y_{\mathrm{min}} = \Phi^{-1}(c) + \sqrt{ [ \Phi^{-1}(c)]^2 + 2k }
$$
and $H_1$ is the functions
defined by equation \eqref{eq:H1} of Section \ref{se:main}.
Letting
$$
y_2 = F(y_1)
$$
  we have by Theorem \ref{th:main} that
$$
y_* < y_2.
$$
However, more is true.   Note that the map $F$ has 
a unique fixed point $y_*$.  Since
	$$
	\lim_{y \downarrow y_{\mathrm{min}} } F(y) = \infty
	$$
	we conclude by the continuity of $F$ that $F(y) > y$ for $ y_{\mathrm{min}} < y <  y^*$,
	and more importantly, that $F(y) < y$ for $y > y^*$.   
	In particular, 
	$$
	y_2 = F(y_1) < y_1.
	$$
	That is, $y_2$ is a \textit{better} approximation of $y_*$ and the error term has the same
	sign as the original approximation.  Of course, this process can iterated. Letting 
$y_n = F(y_{n-1})$ we see that the sequence $(y_n)_{n \ge 1}$
is decreasing  and $\inf_n y_n = y_*$.  

Notice that this sequence converges very rapidly. Indeed, by  Taylor's theorem 
\begin{align*}
y_n &= F(y_{n-1}) \\
& = F(y^*) + F'(y_*) ( y_{n-1}- y_*) + \frac{1}{2}F''(\hat{y})( y_{n-1}- y_*)^2  
\end{align*}
for some $y_* < \hat y < y_{n-1}$.
Since $y_*$ minimises $F$ we have 
$$
F'(y_*) = 0
$$
and hence, by the continuity of $F''$, we have
$$
	\frac{ y_n -y_*}{(y_{n-1}- y_*)^2} \to \frac{1}{2} F''(y^*) = \frac{1}{2 y_*} \left( \frac{k}{y_*}+ \frac{y_*}{2} \right)^2
	$$
	as $n \to \infty$.

\begin{figure}	
\caption{The cobweb diagram illustrating the convergence of $y_0, y_1, y_2, \ldots$ to the fixed point $y_* = F(y_*)$.  }
	  \label{fi:cobweb}
		\includegraphics[trim = 0cm 0 0cm 0, clip, scale = 0.4]{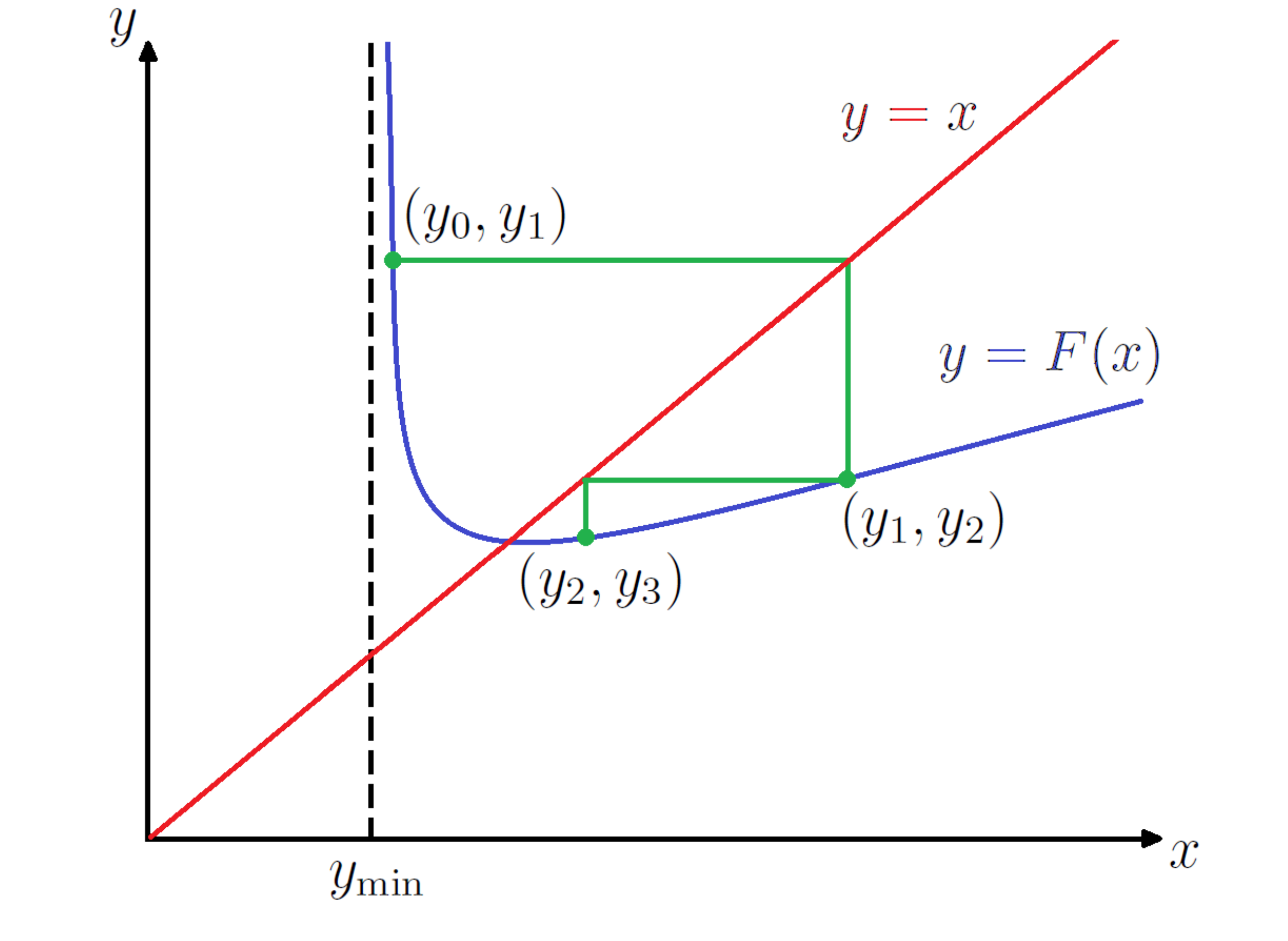}		
	\end{figure}
	
Furthermore, we can find our initial upper bound $y_1$ by 
choosing any $y_0 > y_{\mathrm{min}}$ and letting $y_1 = F(y_0)$.  This
procedure is  illustrated by the cobweb
diagram of Figure \ref{fi:cobweb}.   Of course, the convergence can be helped along
by an inspired choice of $y_0$ as discussed at the beginning of this section.

The above discussion of a rapidly converging sequence 
should be contrasted with the approach taken, for instance, in 
the paper of Gao \& Lee \cite{GaoLee}.  There a systematic method of computing 
terms in the asymptotic series for implied volatility is obtained.  However,
unlike the procedure discussed above, an asymptotic
series  may diverge as more terms 
are added.  

A third and final point is that the approximations for the implied total standard deviation 
are not particularly interesting on their own. Indeed, to use the formulae in
Proposition \ref{th:gul-bounds} one must already know the normalised bond price $c(k)$.
If this quantity is to be calculated numerically from a certain model, one might 
as well compute the $\IBS(k, c(k))$ numerically also.  The point of these bounds is to 
be used in conjunction with other,  model dependent bounds on $c(k)$ to obtain useful
bounds on the quantities of interest.

\section{Acknowledgement} This work was presented at the
Conference on Stochastic Analysis for Risk Modeling in Luminy and the
Tenth Cambridge--Princeton Conference in Cambridge.  I would like to
thank the participants for their comments. I would also like to thank
the anonymous referees whose comments on   earlier drafts of this paper
greatly improved the present presentation. 
Finally I would like to thank the Cambridge Endowment for Research in Finance for 
their support.

\end{document}